\newtheorem{observation}{Observation}
\let\leftold\left
\let\rightold\right
\renewcommand{\left}{\mathopen{}\mathclose\bgroup\leftold}
\renewcommand{\right}{\aftergroup\egroup\rightold}
\newcommand{\pty}{\ensuremath{\pi}\xspace}
\newcommand{\copty}{\ensuremath{\pi^c}\xspace}
\newcommand{\cpty}{\ensuremath{\overline{\pi}}\xspace}
\newcommand{\forb}{\ensuremath{S_\pty}\xspace}
\newcommand{\obli}{\ensuremath{S_{\bar{\pi}}}\xspace}
\newcommand{\alg}{\ensuremath{\textsc{Alg}}\xspace}
\newcommand{\algOpt}{\ensuremath{\textsc{Opt}}\xspace}
\newcommand{\tape}{\ensuremath{\phi}\xspace}
\newcommand{\acc}{\ensuremath{S_\textsc{Alg}}\xspace}
\newcommand{\maxP}{\ensuremath{\textsc{Max-}\scalebox{1.5}{\pty}}\xspace}
\newcommand{\minP}{\ensuremath{\textsc{Min-}\scalebox{1.5}{\pty}}\xspace}
\newcommand{\maxASG}{\ensuremath{\textsc{maxASGk}}\xspace}
\newcommand{\maxASGu}{\ensuremath{\textsc{maxASGu}}\xspace}
\newcommand{\co}[1]{\ensuremath{\overline{#1}}\xspace}
\newcommand{\cost}[1]{\ensuremath{\textrm{cost}(#1)}\xspace}
\newcommand{\profit}[1]{\ensuremath{\textrm{profit}(#1)}\xspace}
\newcommand\ie{i.\,e.}
\title{Advice Complexity of the Online Induced Subgraph Problem\footnote{Supported in part by the Villum Foundation and the Stibo-Foundation and SNF grant 200021-146372.}}
\author[1]{Dennis Komm}
\author[2]{Rastislav Kr\'alovi\v{c}}
\author[3]{Richard Kr\'alovi\v{c}}
\author[4]{Christian Kudahl}
\affil[1]{Dept.\ of Computer Science, ETH Zurich; \texttt{dennis.komm@inf.ethz.ch}}
\affil[2]{Dept.\ of Computer Science, Comenius University; \texttt{kralovic@dcs.fmph.uniba.sk}}
\affil[3]{Google Inc., Switzerland; \texttt{richard.kralovic@dcs.fmph.uniba.sk}}
\affil[4]{Dept.\ of Mathematics and Computer Science, University of Southern Denmark\\\texttt{kudahl@imada.sdu.dk}}
\begin{document}

\maketitle

\begin{abstract}
  Several well-studied graph problems aim to select a largest (or smallest) induced
  subgraph with a given property of the input graph.  Examples of such problems include maximum
  independent set, maximum planar graph, maximum induced clique, maximum
  acyclic subgraph (a.k.a.\ minimum feedback vertex set), and many others.  In
  online versions of these problems, vertices of the graph are presented in an
  adversarial order, and with each vertex, the online algorithm must irreversibly
  decide whether to include it into the constructed subgraph, based only on the
  subgraph induced by the vertices presented so far.  We study the properties
  that are common to all these problems by investigating the generalized
  problem: for an arbitrary but fixed hereditary property \pty, find some maximal
  induced subgraph having \pty.  We study this problem from the point of view of
  advice complexity, \ie, we ask how some additional information about the yet
  unrevealed parts of the input can influence the solution quality.  We evaluate
  the information in a quantitative way by considering the best possible
  advice of given size that describes the unknown input.  Using a result from Boyar
  et al.\ [STACS 2015, LIPIcs 30], we give a tight
  trade-off relationship stating that for inputs of length $n$ roughly $n/c$ bits of advice are both needed and
  sufficient to obtain a solution with competitive ratio $c$, regardless of the
  choice of \pty, for any $c$ (possibly a function of $n$).  This complements the
  results from Bartal et al.\ [SIAM Journal on Computing 36(2), 2006] stating that,
  without any advice, even a randomized
  algorithm cannot achieve a competitive ratio better than $\Omega(n^{1-\log_{4}3-o(1)})$.

  Surprisingly, a similar result cannot be obtained for the symmetric problem:
  for a given cohereditary property \pty, find a minimum subgraph having \pty. We
  show that the advice complexity of this problem varies significantly with
  the choice of \pty.
  
  We also consider the so-called preemptive online model, inspired by some
  application mainly in networking and scheduling, where the decision of the
  algorithm is not completely irreversible.  In particular, the algorithm may
  discard some vertices previously assigned to the constructed set, but
  discarded vertices cannot be reinserted into the set again.  We show that, for
  the maximum induced subgraph problem, preemption cannot help much, giving a
  lower bound of $\Omega(n/(c^2\log c))$ bits of advice needed to obtain
  competitive ratio $c$, where $c$ is any increasing function bounded by
  $\sqrt{n/\log n}$.  We also give a linear lower bound for $c$ close to $1$.
\end{abstract}

\section{Introduction}\label{sec:intro}

Online algorithms get their input gradually, and this way have to produce parts
of the output without full knowledge of the instance at hand, which is a large
disadvantage compared to classical \emph{offline computation}, yet a realistic
model of many real-world scenarios \cite{BE98}.  Most of the offline problems
have their online counterpart.  Instead of asking about the time and space
complexity of algorithms to solve a computational problem, \emph{competitive
analysis} is commonly used as a tool to study how well online algorithms
perform  \cite{BE98,ST85} without any time or space restrictions;  the
analogous offline measurement is the analysis of the \emph{approximation
ratio}.  A large class of computational problems for both online and offline
computation are formulated on graphs; we call such problems (online)
\emph{graph problems}.

In this paper, we deal with problems on unweighted undirected graphs that are
given to an online algorithm vertex by vertex in consecutive discrete time steps.
Formally, we are given a graph $G=(V,E)$, where $|V|=n$, with an ordering
$\prec$ on $V$.  Without loss of generality, assume $V=\{v_1,\dots,v_n\}$, and
$v_1\prec v_2 \prec\dots\prec v_n$ specifies the order in which the vertices of
$G$ are presented to an online algorithm; this way, the vertex $v_i$ is given
in the $i$th time step.  Together with $v_i$, all edges $\{v_j,v_i\}\in E$ are
revealed for all $v_j\prec v_i$.  If $v_i$ is revealed, an online algorithm
must decide whether to accept $v_i$ or discard it.  Neither $G$ nor $n$ are
known to the online algorithm.  We study two versions of online problems; with
and without preemption.  In the former case, the decision whether $v_i$ is
accepted or not is definite.  In the latter case, in every time step, the
online algorithm may preempt (discard) some of the vertices it previously
accepted; however, a vertex that was once discarded cannot be part of the
solution anymore.

For an instance $I=(v_1,\dots,v_n)$ of some graph problem, we denote by
$\alg(I)$ the solution computed by some online algorithm \alg; $\algOpt(I)$ denotes an
optimal solution for $I$, which can generally only be computed with the full
knowledge of $I$.  We assume that $I$ is constructed in an \emph{adversarial
manner} to give worst-case bounds on the solution quality of any online
algorithm.  This means that we explicitly think of $I$ as being given by an
adversary that knows \alg and wants to make it perform as poorly as possible;
for more details, we refer to the standard literature \cite{BE98}.

For maximization problems with an associated \emph{profit function}
$\textrm{profit}$, an online algorithm \alg is called \emph{$c$-competitive}
if, for every instance $I$ of the given problem, it holds that
\begin{equation}\label{eq:cr1}
  \profit{\alg(I)} \ge 1/c\cdot \profit{\algOpt(I)}\;;
\end{equation}
likewise, for minimization problems with a \emph{cost function} $\textrm{cost}$, we require
\begin{equation}\label{eq:cr2}
 \cost{\alg(I)} \le c\cdot \cost{\algOpt(I)}
\end{equation}
for every instance $I$.
In this context, $c$ may be a positive constant or a function that increases
with the input length $n$. We will use $c$ and $c(n)$ interchangeably to refer to the competitive ratio;
the latter is simply used to emphasize that $c$ may depend on $n$. 

Throughout this paper, $\log$ denotes the binary logarithm $\log_2$.
For $q\in\mathbb{N}$, let $[q]=\{0,1,\ldots,q-1\}$.

Instead of studying specific graph problems, in this paper, we investigate a large
class of such problems, which are defined by \emph{hereditary properties}.
This class includes many well-known problems such as \emph{maximum independent set},
\emph{maximum planar graph}, \emph{maximum induced clique}, and
\emph{maximum acyclic subgraph}.
We call any collection of graphs a \emph{graph property} \pty.
A graph has property $\pty$ if it is in the collection.
We only consider properties
that are \emph{non-trivial}, \ie,  they are  both true for infinitely many
graphs and false for infinitely many graphs.
A property  is called
\emph{hereditary} if it holds that, if a graph $G$ satisfies \pty, then also any
induced subgraph $G'$ of $G$ satisfies \pty; conversely, it is called
\emph{cohereditary} if it holds that, if a graph $G$ satisfies \pty, and $G$ is
an induced subgraph of $G'$, then also $G'$ satisfies \pty.  For a graph
$G=(V,E)$ and a subset of vertices $S=\{v_1,\dots, v_i\}\subseteq V$, let
$G[S]$ (or $G[v_1,\dots,v_i]$) denote the subgraph of $G$ induced by the
vertices from $S$.  For a graph $G=(V,E)$, let $\co{G}=(V,\co{E})$ be the
complement of $G$, \ie, $\{u,v\}\in\co{E}$ if and only if $\{u,v\}\not\in E$.
Let $K_n$ denote the complete graph on $n$ vertices, and let $\co{K}_n$ denote the
independent set on $n$ vertices.
We consider the online version of the problem of finding maximal (minimal,
respectively) induced subgraphs satisfying a hereditary (cohereditary,
respectively) property \pty, denoted by \maxP (\minP, respectively).
For the ease of presentation, we will call such problems \emph{hereditary (cohereditary, respectively) problems}.
Let $\acc:=\alg(I)$ denote the set of vertices accepted by some online algorithm \alg
for some instance $I$ of a hereditary problem.  Then, for \maxP, the
profit of \alg is $|\acc|:=\profit{\alg(I)}$ if $G[\acc]$ has the property \pty and
$-\infty$ otherwise; the goal is to maximize the profit.  Conversely, for \minP, the
cost of \alg is $|\acc|:=\cost{\alg(I)}$ if $G[\acc]$ has the property \pty and
$\infty$ otherwise; the goal is to minimize the cost.
As an example, consider the \emph{online maximum independent set} problem;
the set of all independent sets is clearly a hereditary property (every independent set is a feasible solution, and
every induced subset of an independent set is again an independent set).  With every
vertex revealed, an online algorithm needs to decide whether it becomes part of
the solution or not.  The goal is to compute an independent set that is as
large as possible; the profit of the solution thus equals $|\acc|$.  It is
straightforward to define the problem without or with preemption.

In this paper, we study \emph{online algorithms with advice} for hereditary and
cohereditary problems.  In this setup, an online algorithm is equipped
with an additional resource that contains information about the instance it is
dealing with.  A related model was originally introduced by Dobrev et al.\
\cite{DKP09}.  Revised versions were introduced by Emek et al.\ \cite{EFKR11},
B\"ockenhauer et al.\ \cite{BKKKM09}, and Hromkovi\v{c} et al.\ \cite{HKK10}.
Here, we use the model of the latter two papers.  Consider an input $I=(v_1,\dots,v_n)$ of a
hereditary problem.  An \emph{online algorithm \alg with advice} computes the
output sequence $\alg^{\tape}(I)=(y_1,\dots,y_n)$ such that $y_i$ is computed
from $\tape,v_1,\dots,v_i$, where \tape is the content of the advice tape, \ie,
an infinite binary sequence.  We denote the cost (profit, respectively) of the computed output by
$\cost{\alg^{\tape}(I)}$ ($\profit{\alg^{\tape}(I)}$, respectively).  The algorithm \alg is \emph{$c$-competitive with
advice complexity $b(n)$}, if for every $n$ and for each $I$ of length at most
$n$, there exists some \tape such that $\cost{\alg^{\tape}(I)}\le c\cdot
\cost{\algOpt(I)}$ ($\profit{\alg^{\tape}(I)}\ge (1/c)\cdot \profit{\algOpt(I)}$,
respectively) and at most the first $b(n)$ bits of \tape have been accessed
by \alg.\footnote{Note that usually an additive constant is included in the definition
of $c$-competitiveness, \ie, in \eqref{eq:cr1} and \eqref{eq:cr2}.  However, for the
problems we consider, this changes the advice complexity by at most $O(\log n)$;
see Remark 9 in Boyar et al.\ \cite{BFKM15}).}

The motivation for online algorithms with advice is mostly of theoretical
nature, as we may think of the information necessary and sufficient to compute
an optimal solution as the \emph{information content} of the given problem
\cite{HKK10}.  Moreover, there is a non-trivial connection to randomized online
algorithms \cite{BKKK11,KK11}.  Lower bounds from advice complexity often
translate to lower bounds for semi-online algorithms.  One could consider if
knowing some small parameter for an online problem (such as the length of the
input or the number of requests of a certain type) could result in a much
better competitive ratio.  Lower bound results using advice can often help
answer this question.  Similarly, lookahead can be seen as a special kind of
advice that is supplied to an algorithm.  This way, online algorithms with advice
generalize a number of concepts introduced to give online algorithms more
power.  However, the main question posed is, how much could any kind of (computable)
information help; and maybe even more importantly, which amount of information will
never help to overcome some certain threshold, no matter what this information
actually is.

\subsection*{Organization, Related Work, and Results}

We are mainly concerned with proving lower bounds of the form that a particular
number of advice bits is necessary in order to obtain some certain output quality
for a given hereditary property.  We make heavy use of online reductions between
generic problems and the studied ones that allow to bound the number of advice
bits necessary from below.  Emek et al.\ \cite{EFKR11} used this technique in
order to prove lower bounds for the $k$-server problem.  The foundations of the
reductions as we perform them here are due to B\"ockenhauer et al.\ \cite{BHKKSS14},
who introduced the \emph{string guessing problem}, and Boyar et al.\ \cite{BFKM15},
who studied a problem called \emph{asymmetric string guessing}.
Mikkelsen \cite{mikkelsen15} introduced a problem, which we call the \emph{anti-string guessing problem},
and which is a variant of string guessing with a more ``friendly'' cost function.
Our reductions rely on some results from Bartal et al.\ \cite{BFL06} that
characterize hereditary properties by forbidden subgraphs together with some
insights from Ramsey theory.

In Section \ref{sec:prelim}, we recall some basic results from Ramsey theory, and define
the generic online problems that we use as a basis of our reductions.  In
Section \ref{sec:no_preem}, we study both \maxP and \minP in the case that no
preemption is allowed;  using a reduction from the asymmetric string guessing
problem, we show that any $c$-competitive online algorithm for \maxP needs
roughly $n/c$ advice bits, and this is essentially tight.  This complements the
results from Bartal et al.\ \cite{BFL06} stating that, without any advice, even
a randomized algorithm cannot achieve a competitive ratio better than
$\Omega(n^{1-\log_{4}3-o(1)})$.  Note that the advice complexity of the maximum
independent set problem on bipartite and sparse
graphs was studied by Dobrev et al.\ \cite{DRR15}.  In the subsequent sections,
we allow the online algorithm to use preemption.  First, in
Section \ref{sec:preem_large}, we use a reduction from the string guessing problem to
show a lower bound of $\Omega(n/(c^2\log c))$ bits of advice that are needed to obtain
competitive ratio $c$, where $c$ is any increasing function bounded by
$\sqrt{n/\log n}$.   In Section \ref{sec:preem_small}, using a reduction from the anti-string
guessing problem, we also give a linear lower bound for $c$ close to $1$.

\section{Preliminaries}\label{sec:prelim}

Hereditary properties can be characterized by forbidden induced subgraphs as
follows: if a graph $G$ does not satisfy a hereditary property \pty, then any
graph $H$ such that $G$ is an induced subgraph of $H$ does not satisfy \pty
neither.  Hence, there is a (potentially infinite) set of minimal forbidden graphs
(w.r.t.\ being induced subgraph) \forb such that $G$ satisfies \pty if and only if no
graphs from \forb are induced subgraphs of $G$.  Conversely, any set of graphs
$S$ defines a hereditary property $\pty_S$ of not having a graph from $S$ as
induced subgraph.

Furthermore, there is the following  bijection between hereditary and cohereditary
properties: for a hereditary property \pty we can define a property \cpty such
that a graph $G$ satisfies \cpty if and only if it does not satisfy \pty (it is
easy to see that \cpty is cohereditary), and vice versa.  Hence, a cohereditary
property \cpty can be characterized by a set of minimal (w.r.t.\ being induced subgraph) \emph{obligatory} subgraphs \obli such
that a graph $G$ has the property \cpty if and only if at least one graph from
\obli is an induced subgraph of $G$.

To each property \pty we can define the complementary property \copty such that
a graph $G$ satisfies \copty if and only if the complement of $G$ satisfies
\pty. Clearly, if \pty is (co)hereditary, so is \copty.  Moreover, if $H$ is
forbidden (obligatory, respectively) for \pty, $\overline{H}$ is forbidden (obligatory, respectively) for
\copty.  The following statement is due to Lund and Yannakakis.

\begin{lemma}[Lund and Yannakakis \cite{LY80}, proof of Theorem 4]\label{lm:pty-is-or-cliques}
  Every non-trivial hereditary property \pty is satisfied either by all cliques or by all independent sets.
\end{lemma}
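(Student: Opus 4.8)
The plan is to show that if a non-trivial hereditary property $\pty$ fails on some clique $K_a$ and also fails on some independent set $\co{K}_b$, then $\pty$ can only be true for finitely many graphs, contradicting non-triviality. The key tool is Ramsey's theorem: for any $a, b$ there is a number $R(a,b)$ such that every graph on at least $R(a,b)$ vertices contains either $K_a$ as an induced subgraph or $\co{K}_b$ as an induced subgraph.

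First I would assume, for contradiction, that $\pty$ is not satisfied by all cliques and not satisfied by all independent sets. Since $\pty$ is hereditary, the set of cliques satisfying $\pty$ is downward closed: if $K_a$ fails $\pty$, then every $K_{a'}$ with $a' \ge a$ fails $\pty$ as well (as $K_a$ is an induced subgraph of $K_{a'}$). So ``not satisfied by all cliques'' means there is a smallest $a$ with $K_a \notin \pty$, and similarly a smallest $b$ with $\co{K}_b \notin \pty$. Now take any graph $G$ on at least $R(a,b)$ vertices. By Ramsey's theorem, $G$ contains an induced $K_a$ or an induced $\co{K}_b$; either way, $G$ has a forbidden induced subgraph, so $G \notin \pty$ by heredity. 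Hence every graph satisfying $\pty$ has fewer than $R(a,b)$ vertices, so only finitely many graphs satisfy $\pty$, contradicting the assumption that $\pty$ is non-trivial (true for infinitely many graphs).

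The main (and essentially only) obstacle is just invoking Ramsey's theorem in the right induced form; the rest is a short monotonicity-plus-pigeonhole argument. One should be slightly careful that the relevant Ramsey statement concerns \emph{induced} cliques and independent sets, but this is exactly the classical two-colour Ramsey theorem (colour the edges of $K_n$ by presence/absence in $G$), so no extra work is needed. Since the paper recalls basic results from Ramsey theory in the preliminaries, this can be cited directly. I would therefore expect the proof to be just a few lines.

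(As a remark, the contrapositive framing also works cleanly: a hereditary $\pty$ true for infinitely many graphs must, by Ramsey, be true for arbitrarily large cliques or arbitrarily large independent sets, and then heredity propagates ``true'' downward to \emph{all} cliques or \emph{all} independent sets respectively. This is really the same argument and either presentation is fine.)
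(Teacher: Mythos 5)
Your proposal is correct and follows essentially the same argument as the paper: assume $K_a$ and $\co{K}_b$ both fail \pty, invoke the Ramsey number so that every sufficiently large graph contains one of them as an induced subgraph, and use heredity plus non-triviality to reach a contradiction. The only cosmetic difference is that the paper phrases the contradiction by picking one large graph satisfying \pty and concluding that $K_a$ or $\co{K}_b$ must then satisfy \pty, whereas you conclude that only finitely many graphs can satisfy \pty; these are the same argument.
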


\begin{proof}
  Assume, for the sake of contradiction, that there is a hereditary property
  \pty, and two numbers $m$, $n$, such that $K_m$ and $\co{K}_n$ do not satisfy
  \pty. Let $r(m,n)$ be the Ramsey number\cite{R30}, such that every graph with
  at least $r(m,n)$ vertices contains $K_m$ or $\co{K}_n$ as induced subgraph.
  Since \pty is non-trivial, there is a graph $G$ with more than $r(m,n)$
  vertices that satisfies \pty. $G$ contains either $K_m$ or $\co{K}_n$ as
  induced subgraph, and since \pty is hereditary, either $K_m$ of $\co{K}_n$
  satisfies \pty.
\end{proof}

Bartal et al.\ proved the following theorem.

\begin{theorem}[Bartal et al.\ \cite{BFL06} and references therein]
  In the known supergraph mod\-el, any randomized algorithm for the \maxP problem has competitive ratio
  $\Omega(n^{1-\log_43-o(1)})$, even if preemption is allowed.
  \qed
\end{theorem}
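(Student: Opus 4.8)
The plan is to derive this bound from the corresponding lower bound for online maximum independent set (equivalently, by complementation, online maximum clique) in the known supergraph model, and then lift it to an arbitrary non-trivial hereditary property \pty. For the lift: by Lemma~\ref{lm:pty-is-or-cliques}, \pty is satisfied either by all independent sets or by all cliques, and since complementing every revealed graph turns a \maxP instance for \pty into a \maxP instance for \copty while preserving the arrival order, the competitive ratio, and the known-supergraph assumption, we may assume without loss of generality that \pty holds for all independent sets. Since \pty is non-trivial it has a minimal forbidden induced subgraph $H$, which necessarily contains an edge; using $H$ together with Ramsey's theorem one extracts a constant $r=r(\pty)$ and constructs the hard instances so that every induced subgraph occurring in them has clique number and independence number controlled, forcing any vertex set inducing a graph with property \pty to be (within a constant factor of) an independent set. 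Then an $\Omega(f(n))$ lower bound for online independent set on such instances transfers to an $\Omega(f(n))$ lower bound for \maxP — this is precisely the forbidden-subgraph-plus-Ramsey toolkit mentioned in the introduction, and it preserves both the known-supergraph model and the fact that preemption is allowed.

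For the core lower bound on online independent set, by Yao's principle it suffices to exhibit a distribution over pairs (graph, arrival order), with the graph revealed to the algorithm in advance, on which every \emph{deterministic} preemptive online algorithm has expected ratio $\Omega(n^{1-\log_43})$. The construction is self-similar and recursive: a level-$d$ gadget is assembled from a constant number of level-$(d-1)$ gadgets plus a small amount of adversarial randomness, arranged so that (i) the optimum of a level-$d$ gadget is a factor $4$ larger than that of a level-$(d-1)$ gadget, while (ii) whatever the algorithm has committed to inside the sub-gadgets, the edges revealed late at level $d$ destroy all but, in expectation, a $3/4$ fraction of its profit, and this remains true even if the algorithm is permitted to preempt earlier acceptances (the adversary only ever spoils vertices that were already doomed). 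Iterating over $d=\Theta(\log n)$ levels, a level-$d$ gadget has $n=\Theta(4^d)$ vertices, optimum $\Theta(4^d)$, and expected online profit $O(3^d)=O(n^{\log_43})$, so the expected competitive ratio is $\Omega(4^d/3^d)=\Omega(n^{1-\log_43})$, with the $o(1)$ in the exponent absorbing the polylogarithmic slack between the gadget size and $4^d$. This is the construction of Bartal, Fiat, and Leonardi, building on earlier work on online independent set.

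The hard part will be engineering the recursive gadget so that properties (i) and (ii) hold simultaneously and robustly under preemption: the adversary must release the ``merging'' edges of level $d$ only after the algorithm has, in the preemptive sense, irrevocably fixed how it distributes its acceptances among the sub-gadgets, and must do so in a way that the ratio loss per level is a genuine constant strictly above $1$ no matter how the algorithm hedges its bets — this is exactly where the factor $4/3$, and hence the specific exponent $\log_43$, is forced. A secondary technicality is carrying the generalized property \pty through the reduction while keeping the instance-size inflation bounded, which is what the Ramsey-number estimates control, and checking that the ``even if preemption is allowed'' clause survives the lift, which it does because a preemptive strategy for the \maxP instance induces a preemptive strategy for the underlying independent-set instance of no worse quality.
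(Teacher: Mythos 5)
This theorem is not proved in the paper at all --- it is imported from Bartal et al.\ \cite{BFL06} with only a one-paragraph summary of the construction's parameters --- so the comparison is against that summary. Your high-level skeleton (Yao's principle, a self-similar recursive gadget losing a constant factor per level, and a Ramsey/forbidden-subgraph lift from independent set to a general hereditary \pty) does match their approach, and the lift via Lemma~\ref{lm:pty-is-or-cliques} and Lemma~\ref{lm:ramseyish} is exactly the right toolkit.

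However, your core construction has a genuine gap: its mechanism is incompatible with the known supergraph model. In that model the entire graph $G$ is known to the algorithm a priori, and the only online uncertainty is \emph{which} vertices will arrive; no edges are ``revealed late at level $d$.'' Your gadget, in which all $\Theta(4^d)$ vertices are presented and the adversary destroys profit by releasing merging edges after the algorithm has committed, therefore cannot be realized here --- and if all vertices of a known graph are presented, the algorithm can simply read off an optimal induced subgraph, so no superconstant lower bound is possible with your parameters (optimum $\Theta(4^d)$ out of $\Theta(4^d)$ vertices). The actual construction randomizes over which $3^i$ of the $4^i$ known vertices arrive: the presented instance has optimum at least $2^i$ (the optimum \emph{doubles} per level, it does not quadruple), while any deterministic algorithm, hedging against the unknown future sub-blocks, achieves expected profit only $O\!\left((3/2)^i\log n\right)$. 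The ratio $(4/3)^i=n^{1-\log_4 3}$ that you obtain is numerically the same only because $2^i/(3/2)^i=4^i/3^i$, and the $o(1)$ in the exponent absorbs the $\log n$ factor in the profit bound rather than any gadget-size slack. You also correctly flag that ``engineering the recursive gadget'' is the hard part --- but that is precisely the content of the theorem, and the invariants you propose for it are not the ones that make it work.
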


The theorem is formulated in the known supergraph model, where a graph
$G=(V,E)$ with $n$ vertices is a-priori known to the algorithm, and the input
is a sequence of vertices $v_1,\dots,v_k$. The task is to select in an online
manner the subgraph of the induced graph $G[v_1,\dots,v_k]$, having property
\pty.  Note that $n$ in the previous theorem thus refers to the size of the known
supergraph, and not to the length of the input sequence. However, in the proof a
graph with $n=4^i$ vertices is considered, from which subgraphs of size $3^i$
are presented.  Each of these instances has an optimal solution of size at least
$2^i$, and it is shown that any deterministic algorithm can have a profit of at
most $\alpha(3/2)^i\log n$ on average, for some constant $\alpha$.  From that, using
Yao's principle \cite{Yao77} as stated in \cite{BE99}, the result follows.  The same set of
instances thus yields the following result.

\begin{theorem}[Bartal et al.\ \cite{BFL06}]\label{thm:maxP-random}
  Any randomized algorithm for the \maxP problem has competitive ratio
  $\Omega(n^{2/\log 3-1-o(1)})$, even if preemption is allowed.
  \qed
\end{theorem}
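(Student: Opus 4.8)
The plan is to reuse, essentially verbatim, the hard input family that Bartal et al.\ \cite{BFL06} construct for the preceding theorem, and merely re-express the resulting bound in terms of the length $n$ of the presented vertex sequence instead of the size of the known supergraph. Concretely, I would recall from their construction that for every $i\in\mathbb{N}$ there is a graph on $4^i$ vertices together with a distribution $\mathcal{D}_i$ over vertex sequences of length $3^i$ drawn from it, such that every sequence in the support of $\mathcal{D}_i$ has an optimal solution of size at least $2^i$, while every deterministic algorithm --- even one that may preempt and even one that knows the whole supergraph --- has expected profit at most $\alpha\,(3/2)^i\,\log(4^i)$ under $\mathcal{D}_i$, for an absolute constant $\alpha$. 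Since knowing the supergraph can only help the algorithm, this hardness holds a fortiori in the model studied in this paper, and preemption is already permitted.

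Next I would put $n := 3^i$, which is now genuinely the length of the input, so that $i = \log_3 n$ and $\log(4^i) = 2i = \Theta(\log n)$. Averaging over the random coins of any randomized algorithm \alg, its expected profit under $\mathcal{D}_i$ is at most $\alpha\,(3/2)^i\,\log(4^i)$; hence, by Yao's principle \cite{Yao77} as stated in \cite{BE99}, there is a fixed instance $I$ in the support of $\mathcal{D}_i$ for which the expected (over the coins) value of $\profit{\alg(I)}$ is at most this quantity, while $\profit{\algOpt(I)}\ge 2^i$. The competitive ratio is therefore at least
\[
  \frac{2^i}{\alpha\,(3/2)^i\,\log(4^i)} \;=\; \frac{(4/3)^i}{\alpha\,\log(4^i)}\;.
\]
Finally, since $i = \log_3 n$ we have $(4/3)^i = n^{\log_3(4/3)} = n^{(\log 4 - \log 3)/\log 3} = n^{2/\log 3 - 1}$, and the denominator $\alpha\,\log(4^i) = \Theta(\log n)$ contributes only a multiplicative $n^{-o(1)}$, so the competitive ratio is $\Omega\bigl(n^{2/\log 3 - 1 - o(1)}\bigr)$, and nothing in the argument relied on preemption being forbidden.

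The only point requiring care --- and essentially the entire content of the argument --- is the bookkeeping of which quantity is called $n$: in the preceding theorem $n$ denotes the number of vertices of the known supergraph (namely $4^i$), whereas in this statement $n$ denotes the length of the request sequence (namely $3^i$); the two exponents $1 - \log_4 3$ and $2/\log 3 - 1$ are exactly what this change of variable produces, the identity $\log_3(4/3) = 2/\log 3 - 1$ being a one-line check. All the genuine combinatorial work --- the recursive construction of the supergraph and the averaging argument bounding the deterministic profit by $\alpha\,(3/2)^i\,\log(4^i)$ --- is carried out in \cite{BFL06} and is invoked here as a black box, so there is no new obstacle to overcome.
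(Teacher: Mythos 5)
Your proposal is correct and follows exactly the route the paper itself takes: it reuses the hard distribution of Bartal et al.\ (supergraph on $4^i$ vertices, input sequences of length $3^i$, optimum at least $2^i$, deterministic expected profit at most $\alpha(3/2)^i\log(4^i)$), invokes Yao's principle, and re-expresses the exponent in terms of the input length $n=3^i$ via $\log_3(4/3)=2/\log 3-1$. The arithmetic and the observation that preemption and supergraph knowledge only strengthen the lower bound are all as in the paper.
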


Next, we describe some specific online problems that allow us to give lower bounds on
the advice complexity using a special kind of reduction.

B\"ockenhauer et al.\ \cite{BHKKSS14} introduced a very generic online problem called
\emph{string guessing with known history over alphabets of size $\sigma$}
($\sigma$-SGKH).  The input is a sequence of requests $(x_0,\dots,x_n)$
where $x_0=n$ and for $i\ge 1$, $x_i\in\{1,\dots,\sigma\}$.
The algorithm has to produce a sequence of answers $(y_1,\dots,y_n,y_{n+1})$,
where $y_i\in\{1,\dots,\sigma\}$ and $y_{n+1}={\perp}$
and where $y_i$ is allowed to depend on $x_0,\dots,x_{i-1}$ (and of course any advice bits the algorithm reads).
The cost is the number of positions $i$ for which $y_i\ne x_{i}$.

\begin{theorem}[B\"ockenhauer et al.\ \cite{BHKKSS14}]\label{thm:SGKH}
  Any online algorithm with advice for $\sigma$-SGKH that guesses $\gamma n$ bits of the
  input correctly must read at least
  \[ \left(1+(1-\gamma)\log_{\sigma}\left(\frac{1-\gamma}{\sigma-1}\right)+\gamma\log_{\sigma}\gamma\right) n\log_2\sigma \]
  advice bits.
  \qed
\end{theorem}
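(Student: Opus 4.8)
The plan is to prove the lower bound for $\sigma$-SGKH via a counting / information-theoretic argument over the advice strings, which is the standard approach for such string guessing bounds. Let me think through what needs to happen.

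First, I fix $n$ and consider all $\sigma^n$ possible input strings $x = (x_1,\dots,x_n) \in \{1,\dots,\sigma\}^n$ (the prefix $x_0 = n$ is the same for all). An algorithm with $b$ advice bits is really a collection of at most $2^b$ deterministic online algorithms (one per advice string), and on each input the adversary/algorithm picks the best advice string. So the key question: how many input strings can a single deterministic online algorithm guess "well" (at least $\gamma n$ positions correct)?

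Here's the core combinatorial fact. A deterministic online algorithm for $\sigma$-SGKH is just a function that, having seen $x_1,\dots,x_{i-1}$, outputs a guess $y_i$ for $x_i$. So it defines, for each input $x$, a "guess string" $g(x) \in \{1,\dots,\sigma\}^n$; crucially, because the history is revealed, the map $x \mapsto g(x)$ has the property that $g(x)_i$ depends only on $x_1,\dots,x_{i-1}$ — this is exactly saying the algorithm, viewed as a decision tree, is such that two inputs agreeing on the first $i-1$ symbols produce the same $i$-th guess. The set of inputs on which the algorithm is correct in at least $\gamma n$ positions: I claim this set has size at most $\binom{n}{\gamma n}(\sigma-1)^{(1-\gamma)n}$. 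To see this, observe that once I fix a set $T \subseteq [n]$ of $\gamma n$ positions and demand the algorithm be correct exactly on (a superset including) $T$, the algorithm's correctness on $T$ forces $x_i = g(x)_i$ for $i \in T$, and since $g(x)_i$ is determined by the earlier symbols, the symbols in $T$ are determined by the symbols outside $T$; moreover outside $T$ the algorithm is wrong, leaving $\sigma - 1$ choices per position but that isn't needed for an upper bound — actually for the upper bound I just need: the input is determined by its restriction to $[n]\setminus T$ plus the choice of $T$, giving at most $\binom{n}{\gamma n}\sigma^{(1-\gamma)n}$, and tightening to $(\sigma-1)^{(1-\gamma)n}$ by insisting the non-$T$ positions are genuinely wrong. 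I need to be a little careful to handle "at least $\gamma n$" versus "exactly" — standard trick: a string correct on $\geq \gamma n$ positions is correct on some specific $\gamma n$-subset, union-bound over subsets.

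Then the counting finishes: with $2^b$ advice strings we cover at most $2^b \binom{n}{\gamma n}(\sigma-1)^{(1-\gamma)n}$ inputs, and we need this to be $\geq \sigma^n$. Taking $\log_2$ and using the entropy estimate $\binom{n}{\gamma n} \leq 2^{n H(\gamma)}$ where $H$ is binary entropy, rearranging gives $b \geq n\log_2\sigma - nH(\gamma) - (1-\gamma)n\log_2(\sigma-1)$. Converting $H(\gamma)$ and the other terms to base-$\sigma$ logarithms and collecting, this is exactly $\left(1 + (1-\gamma)\log_\sigma\frac{1-\gamma}{\sigma-1} + \gamma\log_\sigma\gamma\right)n\log_2\sigma$; the identity $-H(\gamma) = \log_2\sigma\,(\gamma\log_\sigma\gamma + (1-\gamma)\log_\sigma(1-\gamma))$ and absorbing the $(\sigma-1)$ factor is just algebra. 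The main obstacle — and the only place one must be genuinely attentive — is the combinatorial claim bounding the number of well-guessed inputs per deterministic algorithm: getting the dependence on $(\sigma-1)$ rather than $\sigma$ right, and correctly handling the "$\geq \gamma n$ correct" via choosing a witnessing subset, so that the union bound over the $\binom{n}{\gamma n}$ subsets does not cost an extra factor. Everything after that is routine estimation with binomial/entropy bounds and a change of logarithm base.
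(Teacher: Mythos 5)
The paper does not prove this theorem itself (it is imported verbatim from B\"ockenhauer et al.), but your counting argument is indeed the standard route taken in that reference: view the $b$ advice bits as a choice among $2^b$ deterministic algorithms, bound the number of inputs on which one deterministic algorithm can be ``good,'' and compare with $\sigma^n$. Your central observation --- that the $i$th guess depends only on $x_1,\dots,x_{i-1}$, so an input with correct-set exactly $S$ is determined by $S$ together with one of $\sigma-1$ wrong values at each position outside $S$ --- is exactly the right injectivity argument, and the final algebra (entropy bound on the binomial coefficient, change of base) does reproduce the stated expression.

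There is, however, one imprecise step you should repair: the per-algorithm bound $\binom{n}{\gamma n}(\sigma-1)^{(1-\gamma)n}$ does not follow from the union bound as you describe it. If you fix a $\gamma n$-subset $T$ and require correctness only on $T$, the positions outside $T$ are free, giving $\sigma^{(1-\gamma)n}$, which is too weak; if you instead insist the positions outside $T$ are wrong, you fail to count inputs that are correct on more than $\gamma n$ positions (for such inputs every $\gamma n$-subset of the correct positions leaves correct positions outside $T$). The clean statement is that the exact count is $\sum_{j\ge\gamma n}\binom{n}{j}(\sigma-1)^{n-j}$, i.e.\ the volume of a $\sigma$-ary Hamming ball of radius $(1-\gamma)n$, and for $\gamma\ge 1/\sigma$ this volume is at most $\sigma^{n h_\sigma(1-\gamma)}=2^{nH(\gamma)}(\sigma-1)^{(1-\gamma)n}$ by the standard entropy bound (or by a Chernoff/divergence estimate). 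Since this coincides with what you obtain after applying $\binom{n}{\gamma n}\le 2^{nH(\gamma)}$, your final bound is correct once this intermediate step is stated properly; you should also record the hypothesis $\gamma\ge 1/\sigma$, which the entropy bound requires and which the paper implicitly verifies wherever it invokes the theorem.
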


Mikkelsen \cite{mikkelsen15} introduced the problem \emph{anti-string guessing with
known history over alphabets of size $\sigma$} (anti-$\sigma$-SGKH).  It is
defined exactly as $\sigma$-SGKH except that the cost is the number of
positions $i$ for which $y_i=x_i$.

\begin{theorem}[Mikkelsen {\cite[Theorem 11]{mikkelsen15}}]\label{thm:anti-lb}
  Let $\sigma\geq 2$ and let $1 \leq c < \frac{\sigma}{\sigma-1}$.
  A $c$-competitive anti-$\sigma$-SGKH algorithm must read at least
  \[ b\geq(1-h_\sigma(1/c))n\log_2 \sigma \]
  bits of advice, where $n$ is the input length. This holds even if $n$ is known in advance.
  Here, $h_\sigma$ is the $\sigma$-ary entropy function given by 
  $h_\sigma(x)=x\log_{\sigma}(\sigma-1)-x\log_\sigma(x)-(1-x)\log_\sigma(1-x)$.
  \qed
\end{theorem}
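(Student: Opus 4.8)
\medskip
\noindent\emph{Proof proposal.}
The plan is to prove the bound by the information-theoretic covering argument that also underlies Theorem~\ref{thm:SGKH}, with the roles of the correct and the incorrect positions interchanged so as to fit the anti-cost function. First I would fix the advice length $b$; after the standard normalization (the length of the advice prefix actually read may depend on the input, but this costs only a constant factor), an algorithm reading at most $b$ advice bits on inputs of length $n$ is described by at most $2^b$ deterministic algorithms, one per advice string. Since the history is known, each of these is simply a prediction rule that outputs $y_i$ as a function of $x_1,\dots,x_{i-1}$. Being $c$-competitive means that on every input $x\in\{1,\dots,\sigma\}^n$ the algorithm guesses at least $n/c$ positions incorrectly (the optimum guesses all $n$ incorrectly, which is possible because $\sigma\ge 2$), equivalently it guesses at most $t:=\lfloor(1-1/c)\,n\rfloor$ positions correctly. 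Hence, letting $\mathcal{G}_\phi\subseteq\{1,\dots,\sigma\}^n$ be the set of inputs on which the prediction rule of advice string $\phi$ makes at most $t$ correct guesses, the at most $2^b$ sets $\mathcal{G}_\phi$ must cover all $\sigma^n$ inputs, so some $\mathcal{G}_\phi$ has size at least $\sigma^n/2^b$ by pigeonhole.

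The next and central step is a counting identity: for every fixed prediction rule, $|\mathcal{G}_\phi|=\sum_{j=0}^{t}\binom{n}{j}(\sigma-1)^{n-j}$. I would prove this by a bijection between $\mathcal{G}_\phi$ and the pairs $(S,w)$, where $S\subseteq\{1,\dots,n\}$ with $|S|\le t$ records the hit positions and $w$ assigns to each $i\notin S$ one of the $\sigma-1$ symbols distinct from the symbol the rule will predict at step~$i$; the inverse map runs the rule forward, putting $x_i:=y_i$ for $i\in S$ and $x_i:=w(i)$ otherwise, which reconstructs a unique input whose set of hit positions is precisely $S$, so there is no double counting across different values of $|S|$. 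Writing $j'=n-j$ for the number of incorrect positions, $|\mathcal{G}_\phi|=\sum_{j'=n-t}^{n}\binom{n}{j'}(\sigma-1)^{j'}$, a tail of the expansion of $\sigma^n=((\sigma-1)+1)^n$. The hypothesis $1\le c<\sigma/(\sigma-1)$ is exactly $t<n/\sigma$, so the cut-off $n-t$ lies at or above the mode $\frac{\sigma-1}{\sigma}n$ of $\binom{n}{j'}(\sigma-1)^{j'}$, and normalizing the summand by $\sigma^{\,n\,h_\sigma(1-t/n)}$ and dominating it term by term by the binomial distribution of parameter $1-t/n$ gives the clean bound $|\mathcal{G}_\phi|\le\sigma^{\,n\,h_\sigma(1-t/n)}$. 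Since $1-t/n\ge 1/c\ge\frac{\sigma-1}{\sigma}$ and $h_\sigma$ is decreasing on $[\frac{\sigma-1}{\sigma},1]$, this is at most $\sigma^{\,n\,h_\sigma(1/c)}$. Combining with $|\mathcal{G}_\phi|\ge\sigma^n/2^b$ yields $2^b\ge\sigma^{\,n(1-h_\sigma(1/c))}$, that is, $b\ge(1-h_\sigma(1/c))\,n\log_2\sigma$. Nothing in the argument exploits the algorithm's ignorance of $n$ (indeed $n=x_0$ is the first request), so the bound holds even when $n$ is known in advance.

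I expect the main obstacle to be the tail estimate as $c$ approaches $\sigma/(\sigma-1)$: there $t\to n/\sigma$, the cut-off $n-t$ slides onto the mode, $h_\sigma(1/c)\to 1$, and the target bound degenerates to $0$, so one cannot afford a crude ``largest term times number of terms'' estimate (which would lose an additive $O(\log n)$); the constant-free comparison with a binomial distribution sketched above is what makes the bound come out exactly. A secondary point requiring care is the normalization turning ``reads at most $b$ bits, with a possibly input-dependent prefix length'' into ``$2^b$ deterministic algorithms'', and verifying that the map in the counting step is genuinely a bijection — in particular that the reconstructed input has hit-set precisely $S$, neither a proper subset nor a superset.
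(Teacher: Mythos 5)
The paper does not prove this statement: it is quoted verbatim from Mikkelsen \cite{mikkelsen15} with only a citation, so there is no internal proof to compare against. Your argument is a correct, self-contained proof, and it is the standard covering argument for string-guessing-type advice lower bounds (pigeonhole over the $2^b$ deterministic algorithms, the exact count $\sum_{j\le t}\binom{n}{j}(\sigma-1)^{n-j}$ of inputs on which a fixed adaptive prediction rule scores at most $t$ hits, and the entropy bound on the tail $\sum_{j'\ge\mu n}\binom{n}{j'}(\sigma-1)^{j'}\le\sigma^{nh_\sigma(\mu)}$ valid for $\mu\ge(\sigma-1)/\sigma$, obtained by term-by-term comparison with the $\mathrm{Bin}(n,\mu)$ weights). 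All the steps check out, including the bijection giving exactly $(\sigma-1)^{n-|S|}$ inputs per hit-set $S$ and the monotonicity of $h_\sigma$ on $[(\sigma-1)/\sigma,1]$ needed to pass from $h_\sigma(1-t/n)$ to $h_\sigma(1/c)$.

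One point worth making explicit: the paper's definition of anti-$\sigma$-SGKH (cost equals the number of positions with $y_i=x_i$, so $\algOpt$ has cost $0$) makes strict $c$-competitiveness degenerate if read literally. You silently adopt the reading under which the theorem is actually true and under which it is used in Section \ref{sec:preem_small}, namely that the quantity being compared is the number of \emph{incorrect} guesses (a profit with optimum $n$), so that $c$-competitiveness means at least $n/c$ wrong guesses, i.e., at most $t=\lfloor(1-1/c)n\rfloor$ correct ones. That is the right interpretation, but since it is the hinge of the whole computation you should state it as an explicit normalization of the cost/profit convention rather than leave it parenthetical.
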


Boyar et al.\ \cite{BFKM15} investigated a problem called
\emph{maximum asymmetric string guessing} (\maxASG).  The input is a sequence of
requests $(x_0,x_1,\dots,x_n)$ where $x_0={\perp}$ and for $i\ge1$,
$x_i\in\{0,1\}$.  The algorithm has to produce a sequence of answers
$(y_1,\dots,y_n,y_{n+1})$.  The output is feasible if $x_i\le y_i$ for all
$1\le i\le n$.  The profit of the algorithm is the number of zeroes in
$y_1,\dots,y_n$ for feasible outputs, and $-\infty$ otherwise.  The ``blind''
version of the problem, where the algorithm has to produce the outputs without
actually seeing the requests (\ie, in each step, the algorithm receives some
dummy request $\perp$), is denoted \maxASGu.  In what follows, let
\[ B_c := \log\left(1+\frac{(c-1)^{c-1}}{c^c}\right) \approx\frac{1}{c}\cdot\frac{1}{\mathrm{e}\ln2}\;. \]

\begin{theorem}[Boyar et al.\ \cite{BFKM15}] \label{thm:maxasg-ub}
  For any function $c(n)$ such that $1\le c(n)\le n$, there is a $c$-competitive algorithm for \maxASG
  (\maxASGu, respectively) with advice of size $B_c\cdot n + O(\log n)$.
  Moreover, any $c$-competitive algorithm for \maxASG (\maxASGu, respectively) 
  must read at least $B_c\cdot n - O(\log n)$ bits of advice.
  \qed
\end{theorem}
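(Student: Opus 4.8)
The plan is to prove the upper and lower bounds separately; in both the core object is a \emph{covering design}. Write $Z(x)\subseteq[n]$ for the set of zero-positions of an input $x$ and $z=|Z(x)|$; the optimum profit is $z$. A deterministic strategy (the behaviour fixed by one advice string) that is feasible on $x$ may output $0$ only inside $Z(x)$, so it is $c$-competitive on $x$ exactly when it outputs $0$ on a set $S\subseteq Z(x)$ with $|S|\ge z/c$. For \maxASGu this $S$ is fixed once the advice is fixed; for \maxASG it may depend on the already-seen prefix of $x$ but, crucially, not on the current bit. Throughout I restrict to inputs with exactly $z$ zeros, and for the lower bound I fix $z=\lfloor\alpha^* n\rfloor$ with $\alpha^*=c(c-1)^{c-1}/\bigl((c-1)^{c-1}+c^c\bigr)$, the ``hardest'' density.

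For the upper bound I would build the advice directly. For each $z$ put $s=\lceil z/c\rceil$; the greedy (Lov\'asz--Stein) covering argument gives a family $\mathcal F_z$ of $s$-subsets of $[n]$ of size $\mathrm{poly}(n)\cdot\binom ns/\binom zs$ such that every $z$-subset of $[n]$ contains a member of $\mathcal F_z$. The advice is $z$ together with the index of some $S\in\mathcal F_z$ with $S\subseteq Z(x)$ (which exists); the algorithm --- blind or not --- outputs $0$ exactly on $S$, which is feasible and has profit $s\ge z/c$. Using $\binom nz\binom zs=\binom ns\binom{n-s}{z-s}$ together with entropy bounds, $\binom ns/\binom zs\le\mathrm{poly}(n)\cdot 2^{nh(s/n)-zh(s/z)}$ with $h$ the binary entropy, so the advice length is at most $n\max_{\alpha}\bigl(h(\alpha/c)-\alpha h(1/c)\bigr)+O(\log n)$; the same, blind, construction serves both \maxASG and \maxASGu.

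For the lower bound I would bound, for a single deterministic algorithm $A$, the number $g$ of density-$z$ inputs on which it is $c$-competitive. Run $A$ on a feasible input $x$: at each step where $A$ outputs $0$ the current bit of $x$ must be $0$ and costs nothing to describe, while at each step where $A$ outputs $1$ one bit of $x$ must be supplied; hence a feasible input on which $A$ outputs $\beta$ zeros is reconstructible from a binary string of length $n-\beta$ with exactly $z-\beta$ zeros, so there are at most $\binom{n-\beta}{z-\beta}$ of them. Competitiveness forces $\beta\ge s=\lceil z/c\rceil$, so summing over $\beta\in\{s,\dots,z\}$ and applying the hockey-stick identity, $g\le\binom{n-s+1}{z-s}\le(n+1)\binom{n-s}{z-s}$. (For \maxASGu the zero-set is fixed, so the single term $\binom{n-s}{z-s}$ already suffices.) As every one of the $\binom nz$ density-$z$ inputs must be handled by some advice string, $2^b\ge\binom nz\big/\bigl((n+1)\binom{n-s}{z-s}\bigr)=\binom ns\big/\bigl((n+1)\binom zs\bigr)$, which entropy estimates turn into $b\ge nh(s/n)-zh(s/z)-O(\log n)$.

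The main obstacle is identifying $\max_{\alpha\in[0,1]}\bigl(h(\alpha/c)-\alpha h(1/c)\bigr)$ \emph{exactly} with $B_c=\log\bigl(1+(c-1)^{c-1}/c^c\bigr)$ rather than with a mere $\Theta(1/c)$ estimate. I would obtain this from the stationarity condition, which simplifies to $\log\tfrac{c-\alpha}{\alpha}=c\,h(1/c)=\log\tfrac{c^c}{(c-1)^{c-1}}$, yielding exactly the $\alpha^*$ above; substituting $\alpha^*$ back collapses the entropy expression, after cancellation, to $\log\bigl(1+\tfrac{(c-1)^{c-1}}{c^c}\bigr)=B_c$. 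The remaining technical work --- which is where the $O(\log n)$ slack in the statement is spent --- is to verify that, since $c$ may depend on $n$, the roundings $s=\lceil z/c\rceil$ and $z=\lfloor\alpha^* n\rfloor$ and the $\mathrm{poly}(n)$ factors perturb the relevant exponent by only $O(\log n)$; this amounts to the estimate that $n\,h(\cdot)$ changes by $O(\log n)$ under $O(1/n)$ perturbations of its argument, including near the endpoints $0$ and $1$.
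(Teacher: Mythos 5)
This theorem is imported verbatim from Boyar et al.\ \cite{BFKM15} and the present paper gives no proof of it, so the only meaningful comparison is with the cited source: your covering-design reconstruction is essentially the argument used there, and its key steps check out --- the greedy covering family of size $\mathrm{poly}(n)\binom{n}{s}/\binom{z}{s}$ for the upper bound, the simulation/reconstruction injection bounding each deterministic algorithm by $\binom{n-s+1}{z-s}$ inputs for the lower bound, and the identification of $\max_\alpha\bigl(h(\alpha/c)-\alpha h(1/c)\bigr)$ with $B_c$ via the stationary point $\alpha^*=c(c-1)^{c-1}/\bigl((c-1)^{c-1}+c^c\bigr)$ (the function is concave, so this interior critical point is indeed the maximum). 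No gaps beyond the routine rounding estimates you already flag.
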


Note that in general, it does not make much difference if the length of the input is initially known to the algorithm or not.
More specifically, it changes the advice complexity by at most $O(\log n)$.

\section[MaxPi and MinPi without Preemption]{\boldmath\maxP and \minP\unboldmath without Preemption}\label{sec:no_preem}

First, we show that for any non-trivial hereditary property \pty, the \maxP
problem is equivalent to the asymmetric string guessing in the following sense.

\begin{theorem}\label{thm:max-ub}
  If there is a  $c$-competitive algorithm for \maxASGu then there is a
  $c$-competitive algorithm for \maxP using the same advice.
\end{theorem}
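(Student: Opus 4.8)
The plan is to simulate the \maxASGu algorithm on a cleverly constructed binary string and translate its answers into accept/discard decisions for \maxP. The key observation is Lemma~\ref{lm:pty-is-or-cliques}: every non-trivial hereditary property \pty holds for all cliques or for all independent sets; by passing to the complementary property \copty if necessary (which only swaps cliques and independent sets, and reflects forbidden subgraphs to their complements), I may assume without loss of generality that \pty holds for all independent sets. Fix a minimal forbidden graph $H\in\forb$; since every independent set has \pty, $H$ has at least one edge, say $\{a,b\}$.

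Given an online algorithm \algOpt{} for \maxASGu{} that is $c$-competitive with advice, I build an algorithm for \maxP{} as follows. When the $i$th vertex $v_i$ of the graph arrives, feed a dummy request $\perp$ to the \maxASGu{} algorithm (this is legitimate, since \maxASGu{} is the blind version and sees only dummies anyway) and let $y_i\in\{0,1\}$ be its answer. Interpret $y_i=0$ as ``accept $v_i$'' and $y_i=1$ as ``discard $v_i$''. The advice tape of the \maxP{} algorithm is exactly the advice tape of the \maxASGu{} algorithm, so the advice complexity is preserved. The crucial point is feasibility: the set \acc{} accepted by our algorithm corresponds exactly to the zero positions of $(y_1,\dots,y_n)$, and I must argue that $G[\acc]$ has property \pty. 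Here is where the reduction is set up in reverse: the adversary for \maxASGu{}, knowing \algOpt{}, will produce a string $x$ with $x_i\le y_i$, i.e., the feasible-output constraint forces $\acc\subseteq\{i: x_i=0\}$; and the adversarial graph $G$ is designed so that $\{i: x_i=0\}$ is precisely an independent set in $G$ (the $x_i=1$ vertices being the ones that would create a forbidden subgraph). Since any subset of an independent set is again an independent set, and independent sets have \pty, $G[\acc]$ has \pty{} — so the output is feasible and $\profit{\maxP}=|\acc|=$ number of zeros in $y$.

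It remains to relate the profit to the optimum. The graph $G$ is constructed so that its maximum induced subgraph with \pty{} has size exactly equal to the number of zeros in the \maxASGu{} string $x$ — this is the matching upper bound on \algOpt(I) — so $\profit{\maxASGu}\ge(1/c)\profit{\algOpt_{\maxASGu}}$ translates directly into $\profit{\maxP}\ge(1/c)\profit{\algOpt_{\maxP}}$. The $c$-competitiveness of the \maxP{} algorithm then follows. I expect the main obstacle to be making the correspondence between adversarial \maxASGu{} strings and adversarial graphs fully rigorous in both directions: one must check that \emph{every} \maxASGu{} instance arises as (the relevant projection of) some \maxP{} instance so that the competitive guarantee transfers, and conversely that the graph built from a string never accidentally forces \acc{} out of \pty. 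The forbidden-subgraph characterization of \pty{} together with the edge $\{a,b\}\in E(H)$ is the tool that pins down exactly which vertices must be ``1'' vertices; beyond that, the argument is bookkeeping rather than a deep difficulty.
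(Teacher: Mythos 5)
Your algorithm is the right one (run the blind \maxASGu algorithm on dummy requests with the same advice tape, accept $v_i$ exactly when it answers $0$), but your justification of feasibility and of the profit bound has the reduction pointing the wrong way. In this theorem the graph $G$ is an \emph{arbitrary} \maxP instance; you do not get to ``design'' it so that the zero positions form an independent set, and for a general hereditary \pty (say, planarity) the optimal induced subgraph is not an independent set at all. The machinery you invoke --- Lemma~\ref{lm:pty-is-or-cliques}, the WLOG reduction to independent sets, a forbidden subgraph $H$ with an edge, and an adversarially constructed graph whose $1$-vertices ``create a forbidden subgraph'' --- belongs to the converse reduction (Theorem~\ref{thm:max-lb}), where one builds a graph \emph{from} a string. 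None of it is needed here, and relying on it leaves your feasibility argument valid only for those specially constructed graphs rather than for all inputs.

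The correct (and simpler) argument goes the other way: given the arbitrary input $G$, define the \maxASGu string $x$ \emph{from} $G$ by setting $x_i=0$ iff $v_i$ belongs to some fixed optimal solution of \maxP on $G$. Then the number of zeros in $x$ is exactly $\profit{\algOpt(G)}$, and the oracle supplies the advice that makes the \maxASGu algorithm $c$-competitive on this particular $x$. Feasibility of the \maxASGu output means $x_i\le y_i$, so the accepted set $\acc=\{i:y_i=0\}$ is a subset of the optimal solution; since \pty is hereditary, $G[\acc]$ has \pty --- no independent sets, no forbidden-subgraph analysis. Finally $|\acc|$ equals the \maxASGu profit, which is at least $(1/c)$ times the number of zeros in $x$, i.e., $(1/c)\cdot\profit{\algOpt(G)}$. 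Your worry about whether ``every \maxASGu instance arises from some \maxP instance'' is also moot in this direction: you only need every graph to yield \emph{some} string, which it trivially does.
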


\begin{proof}
  Suppose there is a $c$-competitive algorithm \alg for \maxASGu with
  $c(n)<\infty$.  For any input for \maxP, consider a binary string representing
  this instance (with one bit per vertex) such that zeroes
  correspond to vertices in the optimal solution.  Any feasible solution to
  \maxASGu must cover all ones in the input string, so the set of zeroes in any
  feasible solution of \maxASGu forms subset of vertices of the optimal
  solution of \maxP, and since \pty is hereditary, is a feasible solution of
  \maxP.

  The algorithm for \maxP works as follows: When a vertex arrives, send a
  $\perp$ request to \alg.  If \alg answers $0$, include the vertex in the
  solution.  Since \alg is $c$-competitive, it covers at least a
  $(1/c)$-fraction of the optimum.
\end{proof}

\begin{theorem}\label{thm:max-lb}
  If there is a $c$-competitive algorithm for \maxP that reads $b$ bits of
  advice then there is a $c$-competitive algorithm for \maxASG using
  $b+O(\log^2 n)$ bits of advice.
\end{theorem}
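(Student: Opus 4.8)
The plan is to give an online reduction transforming an instance of \maxASG into an instance of \maxP, so that any $c$-competitive algorithm for \maxP with $b$ advice bits yields a $c$-competitive algorithm for \maxASG with $b+O(\log^2 n)$ advice bits. The subtlety compared to Theorem~\ref{thm:max-ub} is that in \maxASG the algorithm \emph{does} see the request bits $x_i$, and a ``$1$'' request forces $y_i=1$; so the graph instance we build must encode these forced choices. By Lemma~\ref{lm:pty-is-or-cliques}, \pty is satisfied by all cliques or by all independent sets; assume the latter (the clique case is symmetric, passing to the complementary property \copty and using the complement graph). Then an independent set is always a feasible \maxP solution, and the \maxP algorithm accepting a vertex corresponds to answering $0$.

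First I would describe how a \maxASG request sequence $(x_1,\dots,x_n)$ is translated online into a vertex sequence. When $x_i=0$ arrives, present a fresh vertex adjacent to \emph{none} of the previously presented vertices that the \maxASG simulation wants to keep ``available'' (concretely, an isolated-looking vertex); feeding this to the \maxP algorithm, if it accepts, answer $y_i=0$, otherwise $y_i=1$ — both are legal since $x_i=0$. When $x_i=1$ arrives, we must force $y_i=1$, i.e.\ we want the \maxP algorithm to be compelled to reject, or equivalently we want this index to contribute nothing and not damage feasibility. The natural device is to present, for each such forced position, a vertex that is part of a large forbidden subgraph already (partly) laid down among earlier forced vertices, so that accepting it would create a forbidden induced subgraph and destroy the whole profit; since \forb may be infinite and its minimal forbidden graphs arbitrary, here I would instead use the standard trick of encoding the positions of the $1$'s via $O(\log n)$ extra advice bits per ``block'' — or more cleanly, observe that the number of $1$'s, or a pointer structure describing them, costs only $O(\log^2 n)$ bits total, which is exactly the slack in the statement. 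I would package this as: spend $O(\log n)$ bits to communicate $n$, and $O(\log^2 n)$ bits to communicate enough structure so that forced ($x_i=1$) positions can be simulated by simply answering $y_i=1$ and presenting no vertex at all, so the graph instance seen by the \maxP algorithm consists only of the ``free'' positions.

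With that reduction in place, the profit accounting is the heart of the matter. On the free-position subinstance, the optimal \maxP solution is exactly the independent set of all free vertices (they are pairwise non-adjacent by construction), which has size equal to the number of free positions; meanwhile the optimal \maxASG solution sets $y_i=0$ on exactly the free positions, so its profit also equals the number of free positions. A $c$-competitive \maxP algorithm accepts at least a $1/c$ fraction of the free vertices and, being feasible, keeps $G[\acc]$ in \pty; translating acceptances to $0$-answers and rejections to $1$-answers gives a feasible \maxASG output (all $x_i=1$ covered by construction, all $x_i=0$ trivially covered) whose number of zeroes is at least $(1/c)$ times the number of free positions $=(1/c)\cdot\profit{\algOpt}$. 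Hence the simulating algorithm is $c$-competitive, using the $b$ advice bits of the \maxP algorithm plus the $O(\log^2 n)$ bits for the encoding.

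The main obstacle I expect is making the handling of the forced ($x_i=1$) positions genuinely online and genuinely cheap: the \maxASG algorithm learns $x_i$ only at step $i$, so it cannot know in advance which positions are free, yet it must present a consistent graph. Two routes are plausible. One is to always present a vertex but, on forced positions, make it adjacent to a previously reserved ``gadget'' pair so that accepting it necessarily induces one of the minimal forbidden graphs in \forb — the difficulty being that \forb might contain no small graph (e.g.\ if the only forbidden graphs are huge), so such a local gadget need not exist, which is presumably why the advice overhead is $O(\log^2 n)$ rather than $0$. The other, cleaner route, and the one I would pursue, is to precompute offline (with full knowledge of the \maxASG instance) the set of forced positions, encode it compactly on the advice tape using a self-delimiting description of size $O(\log^2 n)$ — for instance, $O(\log n)$ bits to delimit each of $O(\log n)$ run-length or interval descriptors, or simply the construction used in \cite{BHKKSS14,BFKM15} for communicating such index sets — and have the \maxASG algorithm read this prefix first, thereby knowing at each step whether to suppress the vertex. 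The remaining verification is then routine: feasibility of the produced \maxASG output and the competitive-ratio inequality both follow immediately from the independence of the free vertices and the $c$-competitiveness and feasibility of the \maxP algorithm.
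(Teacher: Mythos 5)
There is a genuine gap in your handling of the forced ($x_i=1$) positions, and it is fatal to the proposal as written. Your preferred route is to precompute the set of forced positions offline and write it on the advice tape in $O(\log^2 n)$ bits. But the forced positions form an arbitrary subset of $\{1,\dots,n\}$, so describing them takes up to $n$ bits; no ``run-length'' or ``pointer'' encoding compresses an arbitrary $n$-bit string to polylogarithmic size. Indeed, if such an encoding existed you would immediately obtain an optimal \maxASG algorithm with $O(\log^2 n)$ advice (read the positions of the ones and answer $y_i=x_i$), contradicting the $B_1\cdot n-O(\log n)=n-O(\log n)$ lower bound of Theorem~\ref{thm:maxasg-ub}. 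Your alternative route (a per-position gadget forcing rejection at each $1$-position) is dismissed for the wrong reason --- one may always fix a single forbidden graph $H$ of constant size --- but you are right that it does not work as a purely local device. A further, smaller issue: in \maxASG the answer $y_i$ may depend only on $x_0,\dots,x_{i-1}$, not on $x_i$, so a construction in which the adjacency of the $i$th presented vertex depends on $x_i$ (as yours does: ``when $x_i=0$ arrives, present a fresh vertex adjacent to none\dots'') is not implementable online.

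The mechanism missing from your proposal is the one supplied by Lemma~\ref{lm:ramseyish}: take as base graph the $n$-vertex graph $G'$ in which \emph{every} induced subgraph on $\alpha\log n$ vertices contains the forbidden graph $H$, and, whenever $x_i=0$, delete all edges from $v_i$ to \emph{later} vertices. This is computable online, since the edges of $v_i$ to earlier vertices depend only on $x_1,\dots,x_{i-1}$. The $0$-vertices then form an independent set $I_\nu$ (a feasible, near-optimal solution), while the $1$-vertices retain the dense Ramsey structure, so any feasible \pty-solution --- in particular the output $S$ of the \maxP algorithm --- contains at most $\alpha\log n$ of them. The point is that the \maxP algorithm is \emph{not} prevented from accepting forced vertices; it simply cannot accept many, and the $O(\log^2 n)$ correction string encodes only those $O(\log n)$ mistakes (each index costing $O(\log n)$ bits, plus a compensating set of $0$-vertices), which restores feasibility and yields profit $\min\{|S|,|I_\nu|\}\ge(1/c)\,|I_\nu|$. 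Without this logarithmic bound on the number of correctable errors, the $O(\log^2 n)$ advice budget cannot be met, and your profit accounting (which assumes the optimum consists exactly of the free vertices and that forced positions contribute nothing) does not go through.
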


Before proving Theorem \ref{thm:max-lb} let us recall Lemma~3 from Bartal et al.\ \cite{BFL06}.

\begin{lemma}[Bartal et al.\ \cite{BFL06}]\label{lm:ramseyish}
  Given any graph $H$, there exist constants $n_0$ and $\alpha$ such that for
  all $n > n_0$ there exists a graph $G$ on $n$ vertices such that any induced
  subgraph of $G$ on at least $\alpha \log n$ vertices contains $H$ as an
  induced subgraph.
\end{lemma}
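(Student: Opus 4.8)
We may assume $|V(H)|\ge 2$, since otherwise every graph contains $H$ and the statement is trivial. The plan is to use the probabilistic method: take $G$ to be the uniformly random graph $G(n,1/2)$ on $n$ vertices and show that, with probability tending to $1$, every set of $s:=\lceil\alpha\log n\rceil$ vertices induces a subgraph that contains $H$. Since every induced subgraph on at least $\alpha\log n$ vertices contains one on exactly $s$ vertices, controlling this single size suffices, and the existence of one good $G$ follows once the failure probability is below $1$.

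Fix a vertex set $S$ with $|S|=s$. Then $\Pr[\,G[S]\text{ has no induced copy of }H\,]=N_H(s)/2^{\binom{s}{2}}$, where $N_H(s)$ denotes the number of labelled graphs on $s$ vertices with no induced $H$. The crucial ingredient is the estimate $N_H(s)\le 2^{(1-\delta)\binom{s}{2}}$ for some constant $\delta=\delta(H)>0$ and all sufficiently large $s$. This holds because ``having no induced copy of $H$'' is a hereditary graph property that is a \emph{proper} subclass of all graphs (it excludes $H$ itself), so by the known results on the speed of hereditary properties (Alekseev; Bollob\'as and Thomason) the number of its labelled $s$-vertex members is $2^{(1-1/r+o(1))\binom{s}{2}}$ for a finite integer $r$, its colouring number; moreover, since the property excludes the $|V(H)|$-vertex graph $H$ (a colouring structure on $|V(H)|$ singletons would otherwise realise $H$), one has $r\le |V(H)|-1$, so one may take \eg $\delta:=\tfrac{1}{2(|V(H)|-1)}$ for all $s$ beyond some threshold $s_0(H)$.

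Given this, a union bound over the $\binom{n}{s}\le 2^{s\log n}$ choices of $S$ yields
\[ \Pr\bigl[\,\exists\,S,\ |S|=s,\ G[S]\text{ has no induced }H\,\bigr]\ \le\ 2^{s\log n}\cdot 2^{-\delta\binom{s}{2}}. \]
Plugging in $s=\lceil\alpha\log n\rceil$, the exponent is at most $\bigl(\alpha-\tfrac{\delta\alpha^{2}}{4}\bigr)\log^{2}n+O(\log n)$, which tends to $-\infty$ as soon as $\alpha>4/\delta$. Choosing, say, $\alpha:=5/\delta$ and $n_0$ large enough that $s\ge s_0(H)$ and the displayed probability is strictly below $1$, we conclude that for every $n>n_0$ there exists a graph $G$ on $n$ vertices with the claimed property.

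I expect the counting bound $N_H(s)\le 2^{(1-\delta)\binom{s}{2}}$ to be the main obstacle. Splitting the $s$ vertices into blocks of size $|V(H)|$ and forbidding each block from inducing $H$ only removes a linear term $\Theta(s)$ from the exponent $\binom{s}{2}$, which is hopeless against the $2^{s\log n}$ factor of the union bound (it would force $s=\Omega(n)$). Obtaining a constant-fraction saving in the exponent is precisely what the structural theory of hereditary-property speeds (or an equivalent Ramsey--Tur\'an estimate) provides; once that is granted, the remaining first-moment calculation is routine.
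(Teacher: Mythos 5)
Your proof is correct, but note first that the paper itself offers no proof of this lemma: it is imported verbatim as Lemma~3 of Bartal et al.\ \cite{BFL06}, with a remark that it is a variant of Lemma~9 of Lund and Yannakakis \cite{LY93}, whose construction is a hands-on pseudo-random one. So any complete argument is ``different from the paper's.'' Your route --- $G(n,1/2)$, reduction to a single set size $s=\lceil\alpha\log n\rceil$, and a first-moment bound --- is sound, and the arithmetic checks out: with $N_H(s)\le 2^{(1-\delta)\binom{s}{2}}$ the union bound exponent is $s\log n-\delta\binom{s}{2}\le(\alpha-\delta\alpha^2/4)\log^2 n+O(\log n)$, which is negative for $\alpha>4/\delta$. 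Your justification that the colouring number of the induced-$H$-free property satisfies $1\le r\le|V(H)|-1$ is also correct ($H$ itself is partitionable into $|V(H)|$ singletons, each a clique and an independent set, so the property cannot contain all $(k,r-k)$-colourable graphs once $r\ge|V(H)|$; and it does contain all cliques or all independent sets, so $r\ge1$). The one thing to be aware of is the weight of the black box: the Alekseev--Bollob\'as--Thomason speed theorem is a genuinely nontrivial structural result, and the original proofs of this lemma get by with a more elementary estimate on the number of induced-$H$-free graphs on $s$ labelled vertices (or an equivalent greedy-embedding/pseudo-randomness argument); indeed Lund and Yannakakis predate Bollob\'as--Thomason. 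You correctly diagnose why the naive disjoint-block argument fails (it saves only $\Theta(s)$ in the exponent against a $2^{s\log n}$ union bound), and what you buy with the heavier tool is a very short and transparent calculation, at the cost of importing a deep theorem where a bespoke counting lemma would suffice.
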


This is a variant of Lemma~9 from Lund and Yannakakis \cite{LY93}.

\begin{lemma}[Lund and Yannakakis \cite{LY93}]\label{lm:ramsey2}
  Let $H$ be a graph on $k$ vertices.  For sufficiently large $N$, for any graph
  $G$ on $N$ vertices and for all $\ell=\Omega(\log N)$, a pseudo-random subgraph
  $G'$ of $G$ does not, with probability $1/2$, contain a subset $S$ of $\ell$
  vertices that is a clique in $G$ but $H$ is not an induced subgraph of $G'|S$.
\end{lemma}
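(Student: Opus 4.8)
The plan is a direct probabilistic argument. I would read ``pseudo-random subgraph $G'$ of $G$'' as the distribution on spanning subgraphs of $G$ obtained by keeping every non-edge a non-edge and retaining each edge of $G$ in $G'$ independently with probability $1/2$; the adjective ``pseudo-random'' signals that the edge coins are in fact drawn from a distribution with bounded independence over a polynomial-size sample space, so that an explicit witness $G'$ can be exhibited (this matters because the lemma feeds complexity-theoretic reductions), and I will flag below the single place where the amount of independence is used. All probabilities are over this choice of $G'$, and I write $\rho:=2^{-\binom{k}{2}}>0$, a constant depending only on $H$.

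First I would treat a single $\ell$-clique. Fix an $\ell$-vertex set $S$ that is a clique of $G$, put $m:=\lfloor \ell/k\rfloor$, and partition a $km$-element subset of $S$ into pairwise vertex-disjoint blocks $B_1,\dots,B_m$, each of size exactly $k$. Because $S$ is a clique of $G$, each $G[B_j]$ is complete, so the $\binom{k}{2}$ candidate edges inside $B_j$ form an unbiased random subset, $G'[B_j]$ is a uniformly random labelled graph on $k$ vertices, and --- since a $k$-vertex graph contains $H$ as an induced subgraph exactly when it is isomorphic to $H$ --- we get $\Pr\bigl[H\text{ is an induced subgraph of }G'[B_j]\bigr]\ge\rho$. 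The blocks are vertex-disjoint, so these $m$ events involve pairwise disjoint sets of edge coins and are mutually independent; this is the one spot where I need the sample space to be $t$-wise independent for $t\ge m\binom{k}{2}=\Theta(\log N)$, which a $\mathrm{poly}(N)$-size sample space supports. Hence
\[
  \Pr\bigl[H\text{ is not an induced subgraph of }G'[S]\bigr]\ \le\ \prod_{j=1}^{m}(1-\rho)\ =\ (1-\rho)^{\lfloor \ell/k\rfloor}.
\]

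Finally I would union-bound over the $\ell$-cliques of $G$ that must be ruled out; let $M$ be their number. The lemma follows once $M\cdot(1-\rho)^{\lfloor \ell/k\rfloor}<1/2$, and since $\rho,k$ depend only on $H$, this holds whenever $\ell\ \ge\ \bigl(k/\log_2(1/(1-\rho))\bigr)\log(2M)$ --- that is, precisely when $\ell=\Omega(\log N)$, provided $M=\mathrm{poly}(N)$. Here lies the genuinely delicate point, and the reason the hypothesis reads $\ell=\Omega(\log N)$ rather than, say, $\ell=\Omega(\log^2 N)$: the trivial bound $M\le\binom{N}{\ell}$ is far too weak, so one must instead use the structural fact (available in the setting where the lemma is applied) that $G$ has only polynomially many cliques of the relevant size, equivalently that $\ell$ is taken comparable to $\log M$. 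Granting that, the two displays combine to show that with probability exceeding $1/2$ every $\ell$-clique $S$ of $G$ satisfies ``$H$ is an induced subgraph of $G'[S]$'', which is the claim; re-verifying the estimates with $\Theta(\log N)$-wise independent coins in place of fully independent ones --- needed so that $G'$ can be found deterministically --- is a routine check touching only the product bound over the $m$ blocks. I expect controlling $M$ (and matching the exact pseudo-random model used in the source) to be the main obstacle; the rest is bookkeeping.
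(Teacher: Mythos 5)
The paper does not actually prove this lemma---it is quoted from Lund and Yannakakis \cite{LY93}---so your argument has to stand on its own, and it has a genuine gap at precisely the point you flag and then try to wave away. With only the $\lfloor\ell/k\rfloor$ vertex-disjoint blocks, the failure probability for one fixed $\ell$-clique is $(1-\rho)^{\lfloor\ell/k\rfloor}=2^{-O(\ell)}$, i.e.\ only $N^{-O(1)}$ when $\ell=\Theta(\log N)$, and this cannot absorb a union bound over the $\ell$-cliques of an arbitrary $G$: the lemma is stated ``for any graph $G$ on $N$ vertices,'' so the number $M$ of $\ell$-cliques can be as large as $\binom{N}{\ell}=N^{\Theta(\ell)}$ (take $G=K_N$), and your rescue hypothesis $M=\mathrm{poly}(N)$ is simply not part of the statement. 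Nor is it available ``in the setting where the lemma is applied'': in the proof of Theorem~\ref{thm:lower_maxp_1} the lemma is applied to $G_B$, the complement of $n'$ disjoint $\sigma$-cliques, whose $\ell$-cliques number $\binom{n'}{\ell}\sigma^{\ell}$, again superpolynomial. So the step ``granting that'' cannot be granted, and as written the proof does not establish the lemma.

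The missing idea is to push the per-clique failure probability down to $2^{-\Omega(\ell^2)}$ rather than $2^{-\Omega(\ell)}$. Inside a fixed $\ell$-clique $S$ one can choose $\Theta(\ell^2/k^2)$ pairwise \emph{edge-disjoint} $k$-subsets (an approximate $K_k$-packing of $K_\ell$; vertex-disjointness is not needed, only that the blocks share no edge, so their coin sets are disjoint and the corresponding events independent). Each block induces a copy of $H$ in $G'$ with probability at least $\rho=2^{-\binom{k}{2}}$, so the probability that $H$ is not an induced subgraph of $G'[S]$ is at most $(1-\rho)^{\Theta(\ell^2/k^2)}$. Now the crude union bound over all $\binom{N}{\ell}\le 2^{\ell\log N}$ candidate sets does work as soon as $\ell\ge C_H\log N$ for a constant $C_H$ depending only on $H$, which is exactly the $\ell=\Omega(\log N)$ threshold in the statement---no control on the number of cliques of $G$ is needed; this quadratic bound is also what drives the companion Lemma~\ref{lm:ramseyish}. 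Note that the correction also undercuts your pseudo-randomness remark: each clique now involves $\Theta(\log^2 N)$ relevant edge coins, so $\Theta(\log N)$-wise independence no longer suffices, which is why Lund and Yannakakis work with an explicit pseudo-random graph property rather than bounded independence; but that is secondary to the union-bound issue.
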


\begin{proof}[Proof of Theorem \ref{thm:max-lb}]
  According to Lemma \ref{lm:pty-is-or-cliques}, \pty is satisfied either by all cliques or by all independent sets.
  Without loss of generality, suppose the latter (otherwise, swap the edges and
  non-edges in the following arguments).

  Consider a binary string $\nu=x_1,\dots,x_n$ (for large enough $n$). 
  Let us consider the graph $G_\nu=(V,E)$ defined as follows.
  Let $H$ be an arbitrary but fixed forbidden subgraph of \pty.  Let $G'$ be the $n$-vertex graph
  from Lemma \ref{lm:ramseyish} with vertices $V=\{v_1,\dots,v_n\}$.  If $x_i=0$ for some $i$, delete from $G'$
  all edges $\{v_i,v_j\}$ for $j>i$. In the graph $G_\nu$ thus defined, the vertices $v_i$ for which the
  corresponding $x_i$ satisfies $x_i=0$ (denoted by $I_\nu\subseteq V$ in the sequel)
  form an independent set, and hence $G_\nu[I_\nu]$ has property \pty.  On the 
  other hand, any induced subgraph $G_\nu[S]$ with property \pty
  can contain at most $\alpha\log n$ vertices from $V\setminus I_\nu$ (otherwise it would contain the forbidden
  graph $H$ as induced subgraph).  Note that, with $O(\log n)$ bits of advice to encode $n$,
  the graph $G_\nu$ can be constructed from the string $\nu$ in an online manner:
  the base graph $G'$ is fixed for a fixed $n$, and the subgraph $G_\nu[v_1,\dots,v_i]$
  depends only on the values of $x_1,\dots,x_{i-1}$.

  Now let us consider a $c$-competitive algorithm $\alg_\pty$ for \maxP that uses $b$ bits of advice. 
  Let us describe how to derive an algorithm \alg for \maxASG from $\alg_\pty$.  For a given string $\nu=x_1,\dots,x_n$, where 
  $\perp,x_1,\dots,x_n$ is the input for \maxASG, the advice for \alg consists of three parts: first, there
  is a self-delimited encoding of $n$ using $O(\log n)$ bits, followed by a (self-delimited)
  correction string $e_\nu$ 
  of length $O(\log^2 n)$ bits described later, and the rest is
  the advice for $\alg_\pty$ on the input $G_\nu$.
  Let $S$ be the solution (set of vertices) returned by $\alg_\pty$ on $G_\nu$ (with the proper advice).
  As argued before, $S$ can contain at most $\alpha\log n$ vertices from $V\setminus I_\nu$. The
  indices of these vertices from $S_{\text{out}}:=S\cap(V\setminus I_\nu)$ are part of the string $e_\nu$.
  Apart from that, $e_\nu$ contains the indices of at most $\alpha\log n$ vertices $S_{\text{in}}\subseteq I_\nu$
  such that $|(S\setminus S_{\text{out}})\cup S_{\text{in}}|=\min\{|S|,|I_\nu|\}$.
  
  The algorithm \alg works as follows: at the beginning, it constructs the graph $G'$. When a request 
  $x_i$ arrives, \alg sends the new vertex $v_i$ of $G_\nu$ to $\alg_\pty$, and finds out whether $v_i\in S$.
  If $v_i\in S_{\text{in}}$, \alg answers $0$ regardless of the answer of $\alg_\pty$. Similarly, if $v_i\in S_{\text{out}}$,
  \alg answers $1$.  Otherwise, \alg answers $0$ if and only if $v_i\in S$.
  
  First note that \alg always produces a feasible solution: if the input $x_i=1$
  then either $v_i\not\in S$, and \alg returns $y_i=1$, or else $v_i$ is
  included in $S_{\text{out}}$.  Moreover, the number of zeroes (the profit)
  in the output of \alg
  is $\min\{|S|,|I_\nu|\}$, where $|I_\nu|$ is the profit of the optimal solution. Since $\alg_\pty$ is
  $c$-competitive, $|S|\ge(1/c)\cdot \profit{\algOpt(G_\nu)}\ge(1/c)\cdot|I_\nu|$.
\end{proof}

\begin{corollary} \label{cor:max-uplow}
  Let \pty be any non-trivial hereditary property. Let $A_{c,n}$ be the minimum advice needed for a  
  $c$-competitive \maxP algorithm.  Then
  \[ B_c\cdot n - O(\log n) \le A_{c,n} \le B_c\cdot n + O(\log ^2n)\;. \]
\end{corollary}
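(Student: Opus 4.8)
The plan is to assemble the corollary directly from the three results already in hand, namely Theorems~\ref{thm:max-ub}, \ref{thm:max-lb}, and \ref{thm:maxasg-ub}. The corollary is a pure bookkeeping consequence of these; no new combinatorial content is needed.

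For the upper bound $A_{c,n}\le B_c\cdot n+O(\log^2 n)$, I would start from Theorem~\ref{thm:maxasg-ub}, which gives a $c$-competitive algorithm for \maxASGu using $B_c\cdot n+O(\log n)$ bits of advice. Feeding this into Theorem~\ref{thm:max-ub} yields a $c$-competitive algorithm for \maxP with the \emph{same} advice, i.e.\ $B_c\cdot n+O(\log n)$ bits, which is certainly at most $B_c\cdot n+O(\log^2 n)$. (One could even state the sharper $O(\log n)$ bound here, but the corollary only claims $O(\log^2 n)$, matching the lower-order term appearing on the reduction side.)

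For the lower bound $A_{c,n}\ge B_c\cdot n-O(\log n)$, I would argue by contradiction: suppose some $c$-competitive \maxP algorithm uses fewer than $B_c\cdot n-O(\log n)$ bits of advice. By Theorem~\ref{thm:max-lb}, this converts into a $c$-competitive \maxASG algorithm using $b+O(\log^2 n)$ bits. But Theorem~\ref{thm:maxasg-ub} asserts that any $c$-competitive \maxASG algorithm needs at least $B_c\cdot n-O(\log n)$ bits. The one subtlety to check is that the input length $n$ for \maxASG produced by the reduction in Theorem~\ref{thm:max-lb} is (up to the additive $O(\log n)$ encoding of $n$) the same as the number of vertices in the \maxP instance, so the two occurrences of $n$ are consistent and the $B_c\cdot n$ terms genuinely cancel; combining the inequalities then forces $b\ge B_c\cdot n-O(\log^2 n)$, and absorbing the $O(\log^2 n)$ into the stated $O(\log n)$ slack — or simply stating the bound with the $O(\log^2 n)$ error, since $O(\log n)\subseteq O(\log^2 n)$ — gives the claim. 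The only genuine ``obstacle'' is thus purely notational: making sure the asymptotic error terms are tracked honestly across the two reductions and that the quantifier ``for every $n$'' lines up; there is no real mathematical difficulty, since Lemma~\ref{lm:pty-is-or-cliques} (used inside Theorem~\ref{thm:max-lb}) has already done the work of making the argument uniform over all non-trivial hereditary \pty.

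In writing this up I would simply state: ``The upper bound follows by composing Theorem~\ref{thm:maxasg-ub} with Theorem~\ref{thm:max-ub}; the lower bound follows by composing Theorem~\ref{thm:max-lb} with the lower-bound half of Theorem~\ref{thm:maxasg-ub},'' and leave the $O(\cdot)$ arithmetic implicit, exactly as the corollary's phrasing invites.
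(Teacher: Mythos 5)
Your proposal is correct and is exactly the paper's (implicit) derivation: the corollary is stated without proof and is meant to follow by composing Theorems~\ref{thm:max-ub} and~\ref{thm:max-lb} with the two halves of Theorem~\ref{thm:maxasg-ub}. One correction to your bookkeeping, though: on the lower-bound side the composition yields $b+O(\log^2 n)\ge B_c\cdot n-O(\log n)$, hence $A_{c,n}\ge B_c\cdot n-O(\log^2 n)$, and you cannot ``absorb the $O(\log^2 n)$ into the stated $O(\log n)$ slack''---that inclusion goes the wrong way; conversely, the upper bound comes out as the sharper $B_c\cdot n+O(\log n)$, since Theorem~\ref{thm:max-ub} uses the \maxASGu advice unchanged. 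In other words, the lower-order error terms in the corollary as printed appear to be transposed relative to what the theorems actually deliver; your derivation (with the $O(\log^2 n)$ loss on the lower bound and $O(\log n)$ on the upper bound) is the honest one, and it still gives the overall $O(\log^2 n)$ gap referred to in the closing remarks.
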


We have shown that the advice complexity of \maxP essentially does not depend
on the choice of the property \pty. This is not the case with cohereditary
properties and the problem \minP.  On one hand, there are cohereditary
properties where little advice is sufficient for optimality:

\begin{theorem} \label{thm:min-low}
  If a cohereditary property $\pty$ can be characterized by finitely many obligatory subgraphs, there
  is an optimal algorithm for \minP with advice $O(\log n)$.
\end{theorem}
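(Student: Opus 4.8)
The plan is to exploit the finite characterization of \pty by a finite set \obli $=\{H_1,\dots,H_t\}$ of obligatory subgraphs. The key structural observation is that if \pty has a finite set of obligatory subgraphs, then an optimal solution has size bounded by a constant: specifically, if $G$ satisfies \pty at all, then $G$ contains some $H_j$ as an induced subgraph, and $V(H_j)$ is itself a set of vertices inducing a graph with property \pty (since \pty is cohereditary, and $H_j$ is an induced subgraph of $G[V(H_j)]$ — wait, more directly, $H_j$ has property \pty because $H_j$ trivially contains itself as an induced subgraph). Hence the minimum induced subgraph with property \pty has size at most $k := \max_j |V(H_j)|$, a constant independent of $n$.

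First I would have the advice encode, using $O(\log n)$ bits, the size $m := \cost{\algOpt(I)}$ of an optimal solution (which is at most the constant $k$, so actually $O(1)$ bits suffice for the value, but we may also need $n$ or the positions) together with the $m$ indices $i_1 < i_2 < \dots < i_m$ of the vertices in some fixed optimal solution. Since $m \le k = O(1)$, each index needs $O(\log n)$ bits, so the total advice is $O(\log n)$ bits. The algorithm then simply accepts exactly the vertices at positions $i_1,\dots,i_m$ as they arrive and rejects all others. This is clearly online-implementable: the decision for $v_i$ depends only on whether $i \in \{i_1,\dots,i_m\}$, which is determined by the advice. The resulting set has property \pty because it equals a genuine optimal solution, and its cost equals $\cost{\algOpt(I)}$, so the algorithm is optimal (competitive ratio $1$).

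I should address a subtlety: the algorithm does not know $n$ in advance, and the advice length $b(n)$ is allowed to depend on $n$, so we use a self-delimited encoding of $n$ (or just of $m$ and the indices) at the start of the advice tape; this costs only $O(\log n)$ additional bits and lets the algorithm parse the remaining advice correctly. Another point worth checking is the degenerate case where the input graph $G$ has no induced subgraph with property \pty at all — but \pty being cohereditary and the input $G$ itself being a candidate, $G$ either satisfies \pty (in which case some $H_j \subseteq G$ and a feasible, indeed optimal, solution of size $\le k$ exists) or it does not, in which case $\cost{\algOpt(I)} = \infty$ and no algorithm can do better, so we may assume the feasible case. The advice can also flag this case with one bit if desired.

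The main obstacle, such as it is, is simply arguing cleanly that finiteness of \obli forces optimal solutions to have constant size; once that is established the construction is routine. One must be slightly careful that "minimal obligatory subgraph" is meant with respect to the induced-subgraph order, so that $G$ satisfies \cpty $=$ \pty iff it contains \emph{some} $H_j$ induced; the bound $k = \max_j|V(H_j)|$ then follows, and $k$ is finite precisely because \obli is finite. No Ramsey-type machinery is needed here — that will only be required for the matching lower bounds and for the properties with infinite obligatory sets, not for this positive result.
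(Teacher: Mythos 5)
Your proposal is correct and matches the paper's proof: the paper likewise observes that since each obligatory subgraph has constant size, an optimal solution has constant size, and $O(\log n)$ bits suffice to encode the indices of its vertices. Your additional remarks on self-delimiting encodings and the infeasible case are fine but not needed beyond what the paper states.
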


\begin{proof}
  Since each obligatory subgraph has constant size, $O(\log n)$ bits can be used to encode the indices of the 
  vertices (forming the smallest obligatory subgraph) that are included in an optimal solution.
\end{proof}

On the other hand, there are properties for which the problem \minP requires
large advice.

\begin{theorem}[Boyar et al.\ \cite{BFKM15}] \label{thm:min-high}
  Any $c$-competitive algorithm for the \emph{minimum cycle finding} problem requires at least
  $B_c\cdot n - O(\log n)$ bits of advice.
  \qed
\end{theorem}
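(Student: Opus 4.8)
The plan is to prove this by an online reduction from \maxASG, exploiting that ``containing a cycle'' is cohereditary with the \emph{infinite} obligatory family $C_3, C_4, C_5, \dots$ (so that Theorem~\ref{thm:min-low} does not apply). Concretely, I would show that any $c$-competitive algorithm \alg for minimum cycle finding that reads $b$ advice bits can be converted into a $c$-competitive algorithm for \maxASG that reads $b + O(\log n)$ bits, whence the lower bound of Theorem~\ref{thm:maxasg-ub} gives $b \ge B_c\cdot n - O(\log n)$. The reduction will only cover \maxASG instances whose number of ones is $O(n/c)$, but that is exactly where the \maxASG lower bound already lives.

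Given a \maxASG input $\nu = x_1, \dots, x_n$, I would encode it as a graph $G_\nu$ on vertices $v_1, \dots, v_n$ together with one auxiliary vertex $u$, revealed in this order: when $v_i$ is presented, it is made adjacent to the most recently presented $v_j$ with $j < i$ and $x_j = 1$ (and to no earlier vertex if there is no such $j$); finally $u$ is presented adjacent exactly to the first and the last $v_i$ with $x_i = 1$. Writing $I_\nu := \{v_i : x_i = 1\}$, the graph $G_\nu$ consists of ``pendants'' hanging off the path through $I_\nu$ (in presentation order) plus the two edges joining $u$ to the two ends of that path, so its \emph{unique} cycle is the ``necklace'' through $\{u\}\cup I_\nu$. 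Consequently $\cost{\algOpt(G_\nu)} = |I_\nu| + 1$, and every induced subgraph of $G_\nu$ containing a cycle contains all of $I_\nu$ and $u$. The decisive point is that the edges incident to $v_i$ at the moment it arrives depend only on $x_1, \dots, x_{i-1}$; hence $G_\nu$ is constructible online from $\nu$, and the arrival of $v_i$ leaks nothing about $x_i$. (If $|I_\nu| \le 1$ the necklace degenerates; these few instances are handled by $O(\log n)$ extra advice bits encoding the position of the possible single one, so assume $|I_\nu|\ge 2$.)

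For the back-conversion, the derived \maxASG algorithm simulates \alg on $G_\nu$ --- which it can do, since at the step where it must commit $y_i$ it already knows $x_1, \dots, x_{i-1}$ and hence all edges incident to $v_i$, and it presents $u$ to \alg in the final step --- and sets $y_i := 1$ exactly when \alg accepts $v_i$; its advice is a self-delimited encoding of $n$ followed by the advice of \alg. Since \alg is $c$-competitive with $c < \infty$, its accepted set $S$ is feasible, so $S \supseteq I_\nu \cup \{u\}$; hence $y_i \ge x_i$ for all $i$ (the output is feasible), and the number of ones output equals $|S \cap \{v_1, \dots, v_n\}| = |S| - 1 = \cost{\alg(G_\nu)} - 1 \le c(|I_\nu| + 1) - 1$. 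A short computation shows that, provided $|I_\nu| \le (n-c)/(c+1)$, this forces the number of zeros output to be at least $\frac1c\,(n - |I_\nu|) = \frac1c\,\profit{\algOpt}$, i.e.\ the derived algorithm is $c$-competitive for \maxASG on the class of instances with $2 \le |I_\nu| \le (n-c)/(c+1)$. Plugging this into Theorem~\ref{thm:maxasg-ub} yields $b + O(\log n) \ge B_c\cdot n - O(\log n)$, as required.

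The step I expect to be the main obstacle is not the graph construction --- the ``attach to the last active vertex'' trick cleanly avoids premature disclosure of $x_i$, playing the role that the edge-deletion trick plays in the proof of Theorem~\ref{thm:max-lb} --- but rather reconciling the fact that the optimum of \maxASG (value $n - |I_\nu|$) and that of minimum cycle finding (value $|I_\nu| + 1$) move in opposite directions with $|I_\nu|$. This is why the reduction preserves the competitive ratio only for instances with $|I_\nu| = O(n/c)$, so the remaining work is to confirm that the $B_c\cdot n$ lower bound of Theorem~\ref{thm:maxasg-ub} is already witnessed by \maxASG instances with that few ones (which is the case, since the adversary gains nothing from dense instances), together with the routine verification that $G_\nu$ indeed has the three claimed structural properties.
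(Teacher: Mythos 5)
Your graph gadget is essentially the right one (it is in spirit the construction Boyar et al.\ use to show that minimum cycle finding is complete for their class \aoc; the paper itself gives no proof and simply cites them), and the online constructibility, feasibility, and arithmetic of your back-conversion all check out. The gap is in the very last step, which you defer as "routine": the claim that the $B_c\cdot n$ lower bound for \maxASG is already witnessed by instances with $|I_\nu|=O(n/c)$ ones. This is false. The hard instances for \maxASG have a \emph{constant fraction} of ones, and in fact a majority of ones: the covering-code calculation behind Theorem~\ref{thm:maxasg-ub} shows that an output string $y$ with $w$ zeros $c$-covers at most $\binom{n-w}{z-w}$ inputs with $z\le cw$ zeros, and maximizing $\binom{n}{z}/\binom{n-z/c}{z(1-1/c)}$ over $z$ gives the value $2^{B_c n}$ at $z\approx 0.4n$ for $c=2$ (i.e., $60\%$ ones) and, asymptotically, at a fraction of ones tending to a constant near $1-1/e$. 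Restricting to your admissible class $|I_\nu|\le (n-c)/(c+1)$, i.e., $z\ge cn/(c+1)$, the same calculation yields only about $\Theta\left(\frac{\log c}{c^2}\right)n$ bits for large $c$, which is asymptotically below $B_c\cdot n=\Theta(n/c)$; already for $c=2$ one gets roughly $0.25n$ instead of $B_2 n\approx 0.32n$. So your argument as written proves a weaker bound, not the stated one, and the tension you correctly identified (the two optima moving in opposite directions with $|I_\nu|$) cannot be resolved within a reduction from \maxASG.

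The fix is to reduce from the \emph{minimization} version of asymmetric string guessing (\textsc{minASG}, also from \cite{BFKM15}): the output must satisfy $y_i\ge x_i$ and the cost is the number of ones in $y$, so $\algOpt$ equals $|I_\nu|$ and the hard instances genuinely have few ones. With your necklace graph, "accept $v_i$" maps to "$y_i=1$", a feasible cycle-containing set $S$ satisfies $S\supseteq I_\nu\cup\{u\}$, and $\cost{\alg(G_\nu)}\le c(|I_\nu|+1)$ translates to a \textsc{minASG} cost of at most $c\cdot|I_\nu|+(c-1)$; the additive constant is absorbed as in the paper's footnote on the definition of competitiveness, at a price of $O(\log n)$ advice bits. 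Since \textsc{minASG} carries the same $B_c\cdot n-O(\log n)$ lower bound in \cite{BFKM15}, this yields the theorem. Everything else in your write-up then goes through unchanged.
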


The problem \emph{minimum cycle finding} requires to identify a smallest
possible set of vertices $S$ such that $G[S]$ contains a cycle. Hence, it is
the \minP problem for the non-trivial cohereditary property ``contains cycle.\!''
An upper bound analogous to Theorem \ref{thm:max-ub} follows from \cite{BFKM15}.

\begin{theorem}\label{cohereditary-alg} \label{thm:min-up}
  Let \pty be any non-trivial cohereditary property.  There is a $c$-competitive algorithm for \minP
  which reads $B_c\cdot n+O(\log n)$ bits of advice.
  \qed
\end{theorem}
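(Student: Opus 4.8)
The plan is to mirror the reduction from \maxP to \maxASGu (Theorem~\ref{thm:max-ub}), but now in the minimization direction, reducing \minP to the minimization analogue of asymmetric string guessing. Recall that Boyar et al.\ \cite{BFKM15} also study \emph{minimum asymmetric string guessing}: the input is $(x_0,\dots,x_n)$ with $x_0=\perp$ and $x_i\in\{0,1\}$, the output $(y_1,\dots,y_{n+1})$ is feasible if $x_i\le y_i$ for all $i$, and the cost is the number of ones among $y_1,\dots,y_n$ (with cost $\infty$ for infeasible outputs). They show this problem, even in the ``blind'' variant, admits a $c$-competitive algorithm reading $B_c\cdot n+O(\log n)$ bits of advice. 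So the first step is: invoke that upper bound and produce a $c$-competitive blind algorithm \alg for minimum asymmetric string guessing with the stated advice.

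Next I would set up the online reduction from \minP to this blind string-guessing problem. By the bijection between cohereditary and complementary hereditary properties, and by Lemma~\ref{lm:pty-is-or-cliques}, the complementary hereditary property $\bar\pty$ is satisfied either by all cliques or by all independent sets; equivalently, the cohereditary $\pty$ is \emph{violated} by all cliques or by all independent sets. Fix such a family (say all independent sets violate \pty, i.e.\ no independent set has property \pty --- otherwise swap edges and non-edges). Given an input graph for \minP revealed vertex by vertex, encode it as a binary string where a one marks a vertex that lies in a fixed optimal solution $\algOpt$, and a zero marks a vertex outside it. Whenever a vertex $v_i$ arrives, the algorithm for \minP feeds a $\perp$ request to the blind string-guessing algorithm \alg; if \alg answers $1$, include $v_i$ in the solution, otherwise discard it. Feasibility is the crux: the set of ones output by \alg is a superset of the optimal set (by the feasibility constraint $x_i\le y_i$), and since $\algOpt$ has property \pty and \pty is \emph{cohereditary}, any graph containing $G[\algOpt]$ as an induced subgraph also has property \pty --- in particular $G[\acc]$ does. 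The cost of \alg's output is exactly the number of ones, which is at most $c$ times the number of ones in the optimum, i.e.\ $c\cdot\cost{\algOpt}$, so the derived \minP algorithm is $c$-competitive and uses the same advice.

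The main technical point to check is that the blind version of minimum asymmetric string guessing is genuinely what we need, and that its advice complexity matches $B_c\cdot n+O(\log n)$; this is precisely the ``\maxASGu'' phenomenon (Theorem~\ref{thm:maxasg-ub}) transported to the minimization side, and Boyar et al.\ establish the minimization analogue with the same bound. One must also verify that, unlike in the proof of Theorem~\ref{thm:max-lb}, \emph{no} Ramsey-type gadget or $O(\log^2 n)$ correction string is required here: the reduction is only in one direction (upper bound on \minP), and the cohereditary closure property gives feasibility directly, so the clean $O(\log n)$ overhead (just encoding $n$, needed because the input length is not known a priori, cf.\ the remark after Theorem~\ref{thm:maxasg-ub}) suffices. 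I expect the only genuine obstacle to be citing the minimization variant of \cite{BFKM15} correctly --- the rest is a routine adaptation of Theorem~\ref{thm:max-ub}.
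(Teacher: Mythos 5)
Your proposal is correct and is exactly the argument the paper intends: the paper omits the proof, stating only that the upper bound "analogous to Theorem~\ref{thm:max-ub} follows from \cite{BFKM15}," i.e.\ precisely your reduction from \minP to the blind minimization variant of asymmetric string guessing, with feasibility of the accepted set guaranteed by the cohereditary closure. Your observation that no Ramsey-type gadget is needed (and that the appeal to Lemma~\ref{lm:pty-is-or-cliques} is in fact superfluous here) is also right.
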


\section[MaxPi with Preemption -- Large Competitive Ratios]{\boldmath\maxP\unboldmath with Preemption -- Large Competitive Ratios}\label{sec:preem_large}

In this, and the following section, we consider the problem \maxP with
preemption where \pty is a non-trivial hereditary property.  In every time
step, the algorithm can either accept or reject the currently given vertex and
preempt any number of vertices that it accepted in previous time steps.
However, vertices that were once rejected or preempted cannot be accepted in
later time steps.  The goal is to accept as many vertices as possible. After
reach request, the solution is required to have the property \pty.\footnote{Note
that without advice, the condition to maintain \pty in every time step is
implicit.  Indeed, if \pty  should be violated in any step, the adversary may
just end the input sequence.  Then again, allowing advice changes this game.
Here, an online algorithm may just accept all vertices, compute the optimal
solution at the end, and preempt the corresponding vertices.  Therefore, we
require explicitly that \pty must be maintained in every time step.}  Using
a string guessing reduction, we prove the following theorem.

\begin{theorem}\label{thm:lower_maxp_1}
  Consider the \maxP problem with preemption, for a hereditary property \pty with a forbidden subgraph $H$,
  such that \pty holds for all independent sets.
  Let $c(n)$ be an increasing function such that $c(n)\log c(n)=o(\sqrt{n/\log n})$. 
  Any $c(n)$-competitive \maxP algorithm 
  uses at least $\Omega\left(\frac{n}{c(n)^2\log c(n)}\right)$ bits of advice.
\end{theorem}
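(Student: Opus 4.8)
The plan is to obtain the bound by an online reduction from string guessing with known history, $\sigma$-SGKH, for an alphabet size $\sigma=\sigma(c)$ taken to be a suitable polynomial in $c$, and then to apply Theorem~\ref{thm:SGKH}. The point to keep in mind is that preemption makes acceptance reversible but not rejection: the one mistake an algorithm can never undo is to discard (or never take) a vertex that later turns out to lie in the optimum. The instances therefore have to be engineered so that the adversary can punish precisely this kind of mistake, and so that the Ramsey-type graph of Lemma~\ref{lm:ramseyish} forces the algorithm to make many of them. As usual we may assume that \pty holds for all independent sets, and we fix a forbidden subgraph $H$ of \pty.

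For input length $n$ I would take the graph $G'$ of Lemma~\ref{lm:ramseyish} --- in which every induced subgraph on at least $\alpha\log n$ vertices contains $H$ --- split it into $L$ equal-size \emph{phases}, and single out inside each phase $\sigma$ pairwise disjoint \emph{bundles} of equal size $k$, so that one phase has $\sigma k$ vertices and $n=L\sigma k$. The string to be guessed is $\nu=x_1\dots x_L$ over the alphabet $\{1,\dots,\sigma\}$; the vertices are revealed phase by phase, and phase $\ell$ is presented first with all its $G'$-edges present and then ``resolved'' by deleting all forward edges incident to the bundle $B_{\ell,x_\ell}$. Consequently the union of the resolved bundles is an independent set and hence a feasible solution of size $Lk$, optimal up to lower-order terms, while any \pty-feasible solution contains at most $\alpha\log n$ vertices outside the resolved bundles, since those vertices still induce a subgraph of $G'$. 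The key structural fact is that at every moment before phase $\ell$ is resolved, the vertices the algorithm currently holds from phase $\ell$ together with the non-resolved vertices it holds from earlier phases induce a subgraph of $G'$; hence it holds at most $\alpha\log n$ vertices of a not-yet-resolved phase, and --- because discarded vertices cannot be reinstated and $B_{\ell,x_\ell}$ is contained in phase~$\ell$ --- in each phase it effectively commits, before seeing $x_\ell$, to at most one bundle $B_{\ell,y_\ell}$.

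The reduction and the counting are then routine in outline. Reading off the indices $y_\ell$ yields a $\sigma$-SGKH algorithm that uses the advice of the original algorithm plus $O(\log n)$ bits of self-delimiting bookkeeping; if the original algorithm is $c$-competitive, its profit is at most $\lvert\{\ell : y_\ell = x_\ell\}\rvert\cdot k + \alpha\log n$ and at least $Lk/c$, so it guesses a $\gamma$-fraction of the phases correctly with $\gamma = \Omega(1/c)$ --- here it matters that $k$ is chosen as a sufficiently large multiple of $\alpha\log n$ and that $L$ is large, which is where the hypothesis on $c$ enters. Theorem~\ref{thm:SGKH} then bounds the advice from below by $\Theta(1/c)\cdot L\log\sigma$, and substituting $n=L\sigma k$ and optimizing the free parameters $\sigma$ and $k$ turns this into $\Omega\!\left(n/(c^2\log c)\right)$. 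I expect two steps to need genuine care: (i) turning the informal ``commits to one bundle'' into a clean bound on the profit, in particular handling phases on which the algorithm hedges between the correct bundle and several wrong ones; and (ii) the parameter juggling itself --- picking $\sigma$, $k$, $L$ so that the $\alpha\log n$ Ramsey slack and the $O(\log n)$ bookkeeping are absorbed into the main term --- which is exactly what the condition $c(n)\log c(n) = o(\sqrt{n/\log n})$ is tailored to make possible.
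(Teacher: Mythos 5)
Your overall strategy --- an online reduction from $\sigma$-SGKH via a layered Ramsey-type graph, followed by Theorem~\ref{thm:SGKH} --- is the same as the paper's, but the construction breaks exactly at the point you flag as ``needing genuine care,'' and I do not believe it can be repaired in the form you propose. The problem is the internal structure of a bundle. The edges inside phase $\ell$ are revealed when phase $\ell$ is presented, i.e., \emph{before} $x_\ell$ is known, and revealed edges cannot be retracted; so for the union of resolved bundles to be an independent set (equivalently, for a solution of profit $\approx Lk$ to be attainable while maintaining \pty after every request), \emph{every} bundle must already be internally \pty-feasible at presentation time, hence essentially edgeless, of size $k\ge\alpha\log n$. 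But then your key structural claim --- that the vertices currently held from an unresolved phase induce a subgraph of $G'$ and therefore number at most $\alpha\log n$ --- is false: the algorithm can speculatively hold an \emph{entire} bundle (an independent set is always \pty-feasible), or several whole or partial bundles (two full bundles whenever $H$ is not bipartite, and in general $\Theta(\log)$ partially filled ones), and decide what to keep only after $x_\ell$ is revealed. Conversely, if bundles inherit the edges of $G'$, each resolved bundle contains $H$ and the optimum is nowhere near $Lk$. Either way, ``effectively commits to at most one bundle before seeing $x_\ell$'' does not hold, nor does the profit bound $|\{\ell: y_\ell=x_\ell\}|\cdot k+\alpha\log n$, nor is it clear how $y_\ell$ is even read off from a hedging algorithm. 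This is precisely the difficulty preemption creates, and the paper resolves it differently: each layer consists of $\sigma$ vertices forming a copy of the Ramsey graph of Lemma~\ref{lm:ramseyish} with a \emph{single} distinguished vertex, a second Ramsey-type graph (Lemma~\ref{lm:ramsey2}) controls interactions across layers, so the algorithm's hedging within a layer is confined to a candidate set of size at most $\kappa_1\log\sigma$; an extra $O(\log\log\sigma)$ advice bits select a common index into these candidate sets, which dilutes the correctly guessed fraction to $\Omega(1/(c\log\sigma))$ rather than your claimed $\Omega(1/c)$, and the entropy expression of Theorem~\ref{thm:SGKH} is then estimated carefully to still yield $\Omega(1/c)$ bits per position.

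Separately, even if the commitment claim were granted, your parameters do not give the stated bound. With $k=\Omega(\log n)$ and $\sigma=\mathrm{poly}(c)$, the number of SGKH positions is $L=n/(\sigma k)=O(n/(c\log n))$, so the reduction yields at most $\Omega(n/(c^{2}\log n))$, which is strictly weaker than the claimed $\Omega(n/(c^{2}\log c))$ whenever $c$ is subpolynomial in $n$. In the paper the ``bundle size'' is $1$ and the $\log c$ in the denominator enters only through the alphabet size $\sigma=\Theta(c\log c)$, not through a $\Theta(\log n)$-sized gadget.
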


\begin{proof}
  First, for some given $n$ and $\sigma$, let us define the graph $G_{n,\sigma}$ that will be used in the reduction. 
  To ease the presentation, assume that $n'=n/\sigma$ is integer. 
  Let $G_1$ be a graph with $\sigma$ vertices,
  the existence of which is asserted by Lemma \ref{lm:ramseyish}, such that any subgraph of $G_1$
  with at least $\kappa_1\log\sigma$ vertices contains $H$ as induced subgraph.
  Let $G_B$ be the complement of a union of $n'$ cliques of size $\sigma$ (\ie, $G_B$ consists of $n'$ independent 
  sets $V_1,\dots,V_{n'}$ of size $\sigma$, and all remaining pairs of vertices are connected by edges).
  Applying Lemma \ref{lm:ramsey2} to $G_B$ proves an existence of a graph $G_2\subseteq G_B$ such that any subset
  of $G_2$ with at least $\kappa_2\log n$ vertices contains $H$ as an induced subgraph.
  The graph $G_{n,\sigma}$ is obtained from $G_2$ by replacing each independent set 
  $V_i$ with a copy of $G_1$ (each such copy is called a ``layer'' in what follows).

  Let us suppose that a $c(n)$-competitive \maxP algorithm \alg is given 
  that uses $S(n)$ advice bits on instances of size
  $n$. Now fix an arbitrary $n$, and choose $\sigma:=4c\kappa_1\log(4c\kappa_1)$.
  We show how to solve instances of $\sigma$-SGKH of length $n'-1$ using \alg.
  Let $q_1,\dots,q_{n'-1}$ be the instance of $\sigma$-SGKH, where $q_i\in\{1,\dots,\sigma\}$.
  The corresponding instance $G$ for the \maxP problem is as follows: take the graph $G_{n,\sigma}$, and denote
  by $v_{i,1},\dots,v_{i,\sigma}$ the vertices of the set $V_i$.  Let $v_{i,q_i}$ be the distinguished 
  vertex in set $V_i$.  Delete from $G_{n,\sigma}$ all edges of the form $\{v_{i,q_i},v_{i',q_{i'}}\}$ where $i'>i$.
  The resulting graph $G$ is presented to \alg in the order $v_{1,1},\dots,v_{1,\sigma},v_{2,1},\dots,v_{2,\sigma},\dots$.
  
  Note that $G$ can be constructed online based on the instance $q_1,\dots,q_{n'-1}$.

  The distinguished vertices form an independent set of size $n'$, and thus a feasible solution. On the other hand,
  apart from the distinguished vertices, any solution can have at most $\kappa_1\log\sigma$ vertices in one layer
  (otherwise there would be a forbidden subgraph in that layer), and at most $\kappa_2\log n$ layers with vertices 
  other than the distinguished one (if 
  there are more than $\kappa_2\log n$ nonempty layers, choose one vertex from each nonempty layer;  these form a 
  clique in $G_B$, and due to Lemma \ref{lm:ramsey2} induce $H$ in $G_2$, and thus also in $G$).
  Hence, $n'\le \profit{\algOpt(G)}\le n'+K$, where $K:=\kappa_1\kappa_2\log\sigma\log n$.

  Since \alg is $c$-competitive, it produces a solution of size at least $\profit{\algOpt(G)}/c$. Since any solution can have at 
  most $K$ non-distinguished vertices, the solution of \alg contains at least $g:=\profit{\algOpt(G)}/c-K$ distinguished vertices.

  Consider an algorithm $\alg'$ for $\sigma$-SGKH of length $n'-1$, which simulates \alg: for the $i$th request, it 
  presents \alg with  the layer of vertices $V_i$. Let $\textrm{Cand}(i)\subseteq V_i$ (the \emph{candidate} set)
  be the set of 
  vertices selected by \alg 
  from $V_i$. As stated before, $|\textrm{Cand}(i)|\le\kappa_1\log\sigma$.  A set $\textrm{Cand}(i)$ is \emph{good} if it contains the 
  distinguished vertex $v_{i,q_i}$.  It follows from the definition of the problem that there are at least $g$ good 
  candidate sets.

  $\alg'$ uses an additional $O(\log\log\sigma)$ bits of advice to describe a number $j$, 
  $1\le j\le\kappa_1\log\sigma$, and selects the $j$th vertex from any $\textrm{Cand}(i)$ as an answer (if $|\textrm{Cand}(i)|$ is
  smaller than $j$, it is extended in an arbitrary fixed way). The number $j$ is selected in such a way that
  $\alg'$ gives the correct answer from a fraction of $1/(\kappa_1\log\sigma)$ of good sets. 
  Putting it together, the fraction of correctly guessed numbers by $\alg'$ is at least
  \[ \alpha:=\frac{n'-cK}{c\kappa_1\log\sigma(n'-1)}. \]
  Note that
  \[ \frac{1}{c\kappa_1\log\sigma}\ge\alpha\ge\frac{1}{2c\kappa_1\log\sigma} \]
  holds for large enough $n$, provided that $n'\ge2cK-1$. To see that this inequality holds, note that
  \[ n'\ge2cK-1 \iff \frac{n}{4c\kappa_1\log(4c\kappa_1)}\ge2cK-1\iff(2cK-1)4c\kappa_1\log(4c\kappa_1)\le n\;.\]
  The last inequality holds for large enough $n$ by the choice of $c(\cdot)$ due to the fact that
  \[ (2cK-1)4c\kappa_1\log(4c\kappa_1) = O(c(n)^2K\log c(n)) = O((c(n)\log c(n))^2\log n) = o(n)\;. \]
  
  Due to Theorem \ref{thm:SGKH}, any algorithm for $\sigma$-SGKH that correctly guesses a fraction
  of $\alpha$ numbers (for $\frac{1}{\sigma}\le\alpha\le1$) on an input of length $n'-1$ requires at least 
  $S:=F(\sigma,\alpha)\cdot(n'-1)\cdot\log_2\sigma$ bits of advice where
  \[ F(\sigma,\alpha):=1+(1-\alpha) \log_\sigma\left(\frac{1-\alpha}{\sigma-1}\right)+\alpha \log_\sigma\alpha\;. \]
  First, let us verify that $1/\sigma\le\alpha$:
  \begin{alignat*}{3}
        &&\frac{1}{\sigma}\le\frac{1}{2c\kappa_1\log\sigma} &\le \alpha \\
   \iff && \frac{\sigma}{\log\sigma}                    &\ge 2c\kappa_1 \\
   \iff && \frac{4c\kappa_1\log(4c\kappa_1)}{\log(4c\kappa_1)+\log\log(4c\kappa_1)} &\ge 2c\kappa_1 \\
   \iff && 2\log(4c\kappa_1) &\ge \log(4c\kappa_1) + \log\log(4c\kappa_1) \\
   \iff && \log(4c\kappa_1)  &\ge \log\log(4c\kappa_1)\;. 
  \end{alignat*}
  Next, we use the bounds on $\alpha$ to get
  \begin{align*}
    F(\sigma,\alpha) &\ge 1 + \left(1-\frac{1}{c\kappa_1\log\sigma}\right)\cdot \frac{\log\left(\frac{\left(1-\frac{1}{c\kappa_1\log\sigma}\right)}{\sigma-1}\right)}{\log\sigma} + \frac{1}{2c\kappa_1\log\sigma}\cdot\frac{\log\left(\frac{1}{2c\kappa_1\log\sigma}\right)}{\log\sigma} \\
                     &\ge 1 - \left(1-\frac{1}{c\kappa_1\log\sigma}\right)\cdot\frac{\log(\sigma-1)}{\log\sigma} - \frac{1}{\log\sigma} \left[ \frac{\log\left(\frac{c\kappa_1\log\sigma}{c\kappa_1\log\sigma-1}\right)} {\frac{c\kappa_1\log\sigma}{c\kappa_1\log\sigma-1}} + \frac{\log(2c\kappa_1\log\sigma)}{2c\kappa_1\log\sigma}\right] \\
                     &\ge \frac{1}{c\kappa_1\log\sigma}-\frac{1}{\log\sigma}\left[ \frac{1}{\ln2(c\kappa_1\log\sigma-1)}+\frac{\log(2c\kappa_1\log\sigma)}{2c\kappa_1\log\sigma} \right]
  \end{align*}
  and hence
  \[ F(\sigma,\alpha)\log\sigma\ge
     \frac{1}{c\kappa_1}- \frac{1}{\ln2(c\kappa_1\log\sigma-1)} -
     \frac{\log(2c\kappa_1\log\sigma)}{2c\kappa_1\log\sigma}\;. \]
  Since $\sigma\approx c\log c$, it holds 
  \[ \frac{1}{\ln2(c\kappa_1\log\sigma-1)} = o\left(\frac{1}{c}\right) \]
  and for $n\mapsto\infty$
  \[ \frac{\log(2c\kappa_1\log\sigma)}{2c\kappa_1\log\sigma} \mapsto \frac{1}{2c\kappa_1} \]
  yielding $F(\sigma,\alpha)\log\sigma = \Omega(1/c)$.
  Finally, the theorem follows by noting that $n'-1 = \Omega(n/(c\log c))$.
\end{proof}

Using a similar approach, we can get a stronger bound for the independent
set problem.  Due to space constraints, the proof is moved to Appendix \ref{app:ind_preem}.

\begin{theorem}\label{thm:ind_preem}
  Let $c(n)$ be any function such that 
  \[ 8\le c(n)\le\frac{1+\sqrt{1+4n}}{4}\;. \]
  Any $c(n)$-competitive independent set algorithm that can use preemption
  requires at least
  \[ A_{c,n}\ge\frac{0.01\cdot\log(2c)}{2c^2}(n-2c) \]
  advice bits.
  \qed
\end{theorem}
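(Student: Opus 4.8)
The plan is to follow the reduction-from-string-guessing template of Theorem~\ref{thm:lower_maxp_1}, but to exploit that for the independent set problem the unique minimal forbidden subgraph is a single edge~$K_2$. This collapses the Ramsey machinery of Lemmas~\ref{lm:ramseyish}--\ref{lm:ramsey2}: a ``layer'' can simply be taken to be a clique, so \emph{exactly one} vertex per layer is usable, and there is no logarithmic slack to pay for either inside a layer or across layers. Concretely, fix $\sigma:=2c$ and $n':=n/\sigma$, and from a $\sigma$-SGKH instance $q_1,\dots,q_{n'}$ build a graph $G_q$ on $n$ vertices: $n'$ disjoint cliques (the ``layers'') $V_1,\dots,V_{n'}$, each on $\sigma$ vertices $v_{i,1},\dots,v_{i,\sigma}$, with $v_{i,q_i}$ the distinguished vertex of layer~$i$. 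Cross-layer edges are added so that (i)~the distinguished vertices are pairwise non-adjacent, hence $\{v_{i,q_i}\}_i$ is an independent set of size~$n'$, and (ii)~every non-distinguished vertex of a layer is adjacent to all vertices of all later layers, so that an independent set can contain a non-distinguished vertex only in the last layer it uses. The rule ``$v_{i,j}\sim v_{i'',j''}$ for $i''<i$ iff $j''\neq q_{i''}$'' has exactly these two properties while depending only on $q_1,\dots,q_{i-1}$; this is what lets the adversary present layer~$i$ without knowing~$q_i$, while $q_{i''}$ is revealed --- as the unique non-neighbour inside layer $i''$ --- precisely when layer $i''+1$ is presented, matching the known-history regime of $\sigma$-SGKH.

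Given this construction, $\algOpt(G_q)=n'$, so a $c$-competitive algorithm \alg has $|S|\ge n'/c$ for its final solution~$S$. Since every layer is a clique, $S$ has at most one vertex per layer; and by property~(ii), if $S$ uses layers $i_1<\dots<i_k$ then the vertices chosen in $i_1,\dots,i_{k-1}$ must all be distinguished, so \alg correctly identifies $q_i$ for at least $k-1\ge n'/c-1$ layers. Crucially --- and this is where preemption turns out not to help --- the index of whatever vertex can ever survive step~$i$ in layer~$i$ is fixed already during that time step and depends only on the advice and on $q_1,\dots,q_{i-1}$: accepting several vertices of layer~$i$ and preempting all but one, or preempting layer~$i$'s vertex in a later step, cannot produce a surviving vertex other than that committed index, nor extract information about~$q_i$. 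Reading off this committed index yields an online $\sigma$-SGKH algorithm with the same advice tape (plus an $O(\log n)$ self-delimited encoding of~$n$, and nothing resembling the $O(\log^2 n)$ correction strings of the non-preemptive reductions, since the identification is exact) that guesses a $\gamma$-fraction of the symbols correctly for $\gamma = 1/c - 1/n'$. The hypothesis $8\le c\le\tfrac{1+\sqrt{1+4n}}{4}$ is exactly $n\ge 2c(2c-1)$, which forces $n'\ge 2c-1$, so that a solution of size~$2$ does not already meet the competitive requirement and $\gamma$ sits at (or just above) the random-guessing value $1/\sigma=1/(2c)$.

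Applying Theorem~\ref{thm:SGKH} gives at least $F(\sigma,\gamma)\,n'\log_2\sigma$ advice bits, where $F(\sigma,\gamma)=1+(1-\gamma)\log_\sigma\frac{1-\gamma}{\sigma-1}+\gamma\log_\sigma\gamma$. The last step --- and the main obstacle --- is to lower-bound this entropy term in exactly the awkward regime where $\gamma\approx 1/c$ is close to $1/\sigma$, so that $F(\sigma,\gamma)$ is small: a crude estimate collapses to~$0$, and even a careless one loses the $\log(2c)$ factor. One has to expand $F(\sigma,\gamma)\log_2\sigma$ carefully, using $\log_2(2c-1)=1+\log_2 c-O(1/c)$ and $\log_2(1-1/c)=-O(1/c)$ and keeping track of how the per-symbol cost $\log_2(2c)$ interacts with the small success fraction, and then multiply by $n'-1=(n-2c)/\sigma$; choosing the leading constant conservatively (hence the explicit $0.01$) yields the stated $\tfrac{0.01\log(2c)}{2c^2}(n-2c)$. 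I expect the genuinely delicate part to be making this uniform over the whole claimed range of~$c$, in particular near $c=\Theta(\sqrt{n})$ where $\gamma$ is essentially at the threshold: there a black-box appeal to Theorem~\ref{thm:SGKH} becomes vacuous and one needs instead a direct counting argument --- bounding, for each fixed advice string, the number of instances $q$ on which the (now deterministic) online algorithm can commit to the correct distinguished index in $\ge n'/c-1$ layers --- to recover the logarithmic term.
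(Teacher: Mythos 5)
Your reduction is exactly the one the paper uses in Appendix~\ref{app:ind_preem}: partition the $n$ vertices into $n'=n/\sigma$ cliques (``layers'') of size $\sigma=2c$, give each layer a designated vertex that later vertices are \emph{not} joined to, so that the designated vertices form an independent set of size $n'$ while the maintain-the-property requirement forces any algorithm to keep at most one vertex per layer and to have guessed the designated vertex in every used layer but the last. The interpretation of the surviving vertex as a $\sigma$-SGKH answer, the observation that preemption is useless beyond discarding a wrong guess, the fraction $\alpha=\frac{n'-c}{c(n'-1)}\in[\frac{1}{2c},\frac1c]$ of correct guesses, and the role of the hypothesis $c\le\frac{1+\sqrt{1+4n}}{4}$ (equivalently $n'\ge 2c-1$, i.e.\ $\alpha\ge\frac{1}{\sigma}$) all match the paper.

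The gap is the last step, which you explicitly label ``the main obstacle'' and then do not carry out. The entire quantitative content of the theorem --- the factor $\log(2c)$, the constant $0.01$, and the hypothesis $c\ge 8$ --- lives in the estimate
\[ F(\sigma,\alpha)\log_2\sigma\;\ge\;0.01\cdot\frac{\log(2c)}{c}, \]
and your proposal replaces it with a description of what ``one has to'' do (``expand carefully'', ``choose the leading constant conservatively''). The paper proves it by an explicit chain: plugging the bounds on $\alpha$ into $F$, applying $\ln(1+x)\le x(1+x)$ to control $\log\frac{c}{c-1}$, collecting the error terms into $h(c)=\frac{2+\ln 2\log(2c)}{2\ln 2(c-1)}$, and verifying numerically that $h(c)\le 0.99\cdot\frac{\log(2c)}{c}$ for $c\ge 8$ --- this last numerical fact is precisely where $0.01$ and the lower bound $c\ge 8$ come from, and none of it appears in your write-up. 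Your suggested fallback for $c=\Theta(\sqrt n)$ --- abandoning Theorem~\ref{thm:SGKH} in favor of a direct counting argument over advice strings --- is not developed either, and it is not what the paper does: the paper uses the same black-box appeal over the entire admissible range of $c$, with the upper bound on $c$ chosen exactly so that $\alpha\ge 1/\sigma$ keeps the appeal non-vacuous. (Your instinct that the bound degenerates as $\alpha\to 1/\sigma$ is sound, and the paper's own handling of this regime is arguably delicate, but flagging a difficulty is not the same as resolving it.) As it stands, your argument establishes a reduction to $\sigma$-SGKH but not the stated lower bound $\frac{0.01\log(2c)}{2c^2}(n-2c)$.
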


\section[MaxPi with Preemption -- Small Competitive Ratios]{\boldmath\maxP\unboldmath with Preemption -- Small Competitive Ratios}\label{sec:preem_small}

In this section, we use Theorem \ref{thm:anti-lb} to give bounds on small constant
values of the competitive ratio for algorithms for \maxP complementing the
bounds from Theorem \ref{thm:lower_maxp_1}.  In what follows, \pty is a non-trivial
hereditary property and $k$ is the size of a smallest forbidden subgraph
according to \pty.

\begin{theorem}\label{thm:lower_maxp_2}
  If there is a $c$-competitive algorithm for \maxP with preemption that
  reads $b(nk)$ bits of advice for inputs of length $nk$, then there exists a
  $c$-competitive algorithm for Anti-$k$-SGKH,
  which, for inputs of length $n$, reads $b(kn)+O(\log^2 n)$ bits of advice.
\end{theorem}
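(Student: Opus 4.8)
The plan is to adapt the gadget and online-reduction technique behind Theorem \ref{thm:lower_maxp_1}, arranging the gadget so that \emph{avoiding} the distinguished vertices (rather than hitting them) is what a large feasible solution does, and then to invoke Theorem \ref{thm:anti-lb} in place of Theorem \ref{thm:SGKH}. By Lemma \ref{lm:pty-is-or-cliques} we may assume, complementing the graph if necessary, that \pty holds for every independent set; then the smallest forbidden subgraph $H$ has exactly $k$ vertices, every graph on fewer than $k$ vertices satisfies \pty, and $H$ has an edge. Using Lemma \ref{lm:ramseyish}, fix a graph $D$ on $n$ vertices in which every $\alpha\log n$-vertex induced subgraph contains $H$. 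Given an Anti-$k$-SGKH instance $q=(q_1,\dots,q_n)$ over $\{1,\dots,k\}$, build a \maxP instance $G_q$ on $nk$ vertices arranged in $n$ blocks $V_i=\{v_{i,1},\dots,v_{i,k}\}$, with $v_{i,q_i}$ \emph{distinguished}; choose the edges so that (a) the $n(k-1)$ non-distinguished vertices induce an independent set, so $\profit{\algOpt(G_q)}\ge n(k-1)$; (b) the $n$ distinguished vertices induce a copy of $D$, so every \pty-subgraph of $G_q$ contains the distinguished vertices of at most $\alpha\log n$ blocks; and (c) the special role of $v_{i,q_i}$ is carried only by its edges to later blocks, so that the portion of $G_q$ revealed together with block $i$ depends only on $q_1,\dots,q_{i-1}$ and $G_q$ can be produced online in step with an Anti-$k$-SGKH instance --- exactly as in the proof of Theorem \ref{thm:lower_maxp_1}.

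Given a $c$-competitive \maxP-with-preemption algorithm \alg using $b(nk)$ advice bits, the derived Anti-$k$-SGKH algorithm $\alg'$ for length $n$ reads: a self-delimited encoding of $n$, a self-delimited correction string of $O(\log^2 n)$ bits (in the spirit of the string $e_\nu$ in the proof of Theorem \ref{thm:max-lb}), and the advice for \alg on $G_q$. Then $\alg'$ feeds $G_q$ to \alg block by block, and when the block of request $i$ has been presented it answers $y_i$ by the index of a vertex \alg currently keeps in $V_i$, overriding this on the blocks named by the correction string. Since \alg is $c$-competitive, $|\alg(G_q)|\ge n(k-1)/c$, of which by (b) at most $\alpha\log n$ are distinguished; hence \alg keeps at least $n(k-1)/c-\alpha\log n$ non-distinguished vertices, lying in at least $n/c-O(\log n)$ distinct blocks, and --- because a vertex never re-enters a solution and \pty must hold at every step --- each such vertex is already in \alg's set when its block is presented. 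The correction string, computed by an advisor who knows $q$ and \alg's advice, lists the $\le\alpha\log n$ blocks on which \alg ultimately keeps the distinguished vertex together with a handful of further blocks closing the $O(\log n)$ gap, each with a non-matching answer; at $O(\log n)$ bits per block this totals $O(\log^2 n)$ bits. Hence $\alg'$ answers $y_i\ne q_i$ on at least $n/c$ positions, i.e., it is $c$-competitive, using $b(kn)+O(\log^2 n)$ advice bits; combined with Theorem \ref{thm:anti-lb} this gives the intended advice lower bound for \maxP with preemption.

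The main obstacle is point (c) together with the extraction step: at the moment block $i$ is presented $\alg'$ does not yet know $q_i$, so it cannot tell which vertex of $V_i$ is distinguished; the gadget must therefore be designed so that a preemption-respecting, near-optimal \alg is nonetheless pinned down on all but $O(\log n)$ blocks to keeping a non-distinguished vertex that $\alg'$ can read off from its behaviour, and one must verify that $O(\log^2 n)$ correction bits indeed suffice to turn ``$n/c-O(\log n)$ good blocks'' into ``$n/c$''. The remaining estimates are routine.
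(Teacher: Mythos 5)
Your high-level plan coincides with the paper's: reduce Anti-$k$-SGKH to \maxP with preemption via a blown-up Ramsey gadget from Lemma \ref{lm:ramseyish}, pay $O(\log^2 n)$ extra advice for a self-delimited correction string, and invoke Theorem \ref{thm:anti-lb}. But your gadget is set up with the roles of the two vertex classes inverted relative to the paper's, and this inversion is exactly where the proof breaks. You make the $n(k-1)$ \emph{non-distinguished} vertices the large independent set (so the profit lives there) and extract $y_i$ as the index of a vertex that \alg \emph{currently keeps} in block $i$. The difficulty you yourself flag as ``the main obstacle'' is genuine and is not resolved by your construction: at the end of block $i$ nothing forces \alg to have committed to a single vertex of $V_i$, nor even to have dropped any vertex of $V_i$. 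Since the revealed graph at that moment is independent of $q_i$ and all of $V_i$ may legally sit in the current solution (to be pruned later once edges to subsequent blocks expose $q_i$), a perfectly $c$-competitive but uncooperative \alg can keep all of $V_i$ at the end of every block; then ``the index of a kept vertex'' carries no information about $q_i$, your $\alg'$ can match $q_i$ on $\Omega(n)$ positions, and the $O(\log^2 n)$ correction string (which only covers the $O(\log n)$ blocks where the distinguished vertex survives to the \emph{final} solution, not the blocks where the end-of-block state is ambiguous) cannot repair this.

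The paper closes this gap with two coupled design choices that are absent from your proposal. First, each block internally induces the forbidden graph $H$ itself (Observation \ref{reduction2}); since \pty must hold after every request, \alg is \emph{forced} to have rejected or preempted at least one vertex of $V_i$ by the time block $i$ ends, so there is always a well-defined ``omitted index'' to read off online. Second, the roles are the reverse of yours: the \emph{distinguished} vertices form the size-$n$ independent set carrying essentially all the profit, while the non-distinguished vertices are Ramsey-limited to $O(\log n)$ in any feasible solution, and $\alg'$ answers with an \emph{omitted} index. Then $c$-competitiveness forces \alg to retain the distinguished vertex in at least $n/c-O(\log n)$ blocks, and in every such block the omitted index automatically differs from $x_i$ -- a correct anti-guess. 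In short, competitiveness pins down the algorithm's per-block commitment only when the answer is decoded from what the algorithm is forced to give up, not from what it chooses to keep; your extraction rule reads the wrong side of that commitment, and I do not see how to salvage it without essentially rebuilding the paper's gadget.
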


\begin{proof}
  According to Lemma \ref{lm:pty-is-or-cliques} \pty is satisfied either by
  all cliques or by all independent sets.  We assume in the following, that
  \pty is satisfied by all independent sets.  We describe how to transform an
  instance of Anti-$k$-SGKH into an instance of
  \maxP with preemption.  The length of the instance for \maxP with preemption
  will be $k$ times as long as the length $n$ of the Anti-$k$-SGKH instance.
  We proceed to show that a $c$-competitive
  algorithm for the latter implies a $c$-competitive algorithm for the former
  which reads at most $O((\log n)^2)$ additional advice bits.
  
  Let $\nu=x_1,x_2,\dots,x_n$ (with $x_i \in \{1,\dots,k$) be an instance of Anti-$k$-SGKH.  Consider the $n$-vertex graph
  $\tilde{G}=(V(\tilde{G}),E(\tilde{G}))$, given by Lemma \ref{lm:ramseyish} for
  a size-$k$ smallest minimal forbidden subgraph $H=(V(H),E(H))$ from \pty.  We
  recall that any induced subgraph of $\tilde{G}$ with at least $\alpha \log n$
  vertices contains $H$ as an induced subgraph.  Let us denote
  $V(\tilde{G})=\{\tilde{v}_1,\dots,\tilde{v}_n\}$ and $V(H)=\{h_1,\dots,h_k\}$.
  We now describe the construction of a graph $G_\nu=(V(G_\nu),E(G_\nu))$,
  which will be the input for the given algorithm for \maxP.  To this end, let
  \begin{align*}
    V(G_\nu) &= \bigcup_{i=1}^{n} \bigcup_{j=1}^{k}v^i_j, \\
    E(G_\nu) &= \{\{v^i_j,v^i_{j'}\} \mid \{h_j,h_{j'}\}\in E(H)\} \cup 
                \{\{v^i_j,v^{i'}_{j'}\} \mid i < i', \{\tilde{v_j},\tilde{v_{j'}}\} \in E(\tilde{G}), j \neq x_i\},
  \end{align*}
  where we assume an ordering $v^1_1, \dots, v^1_k, v^2_1, \dots, v^2_k, \dots, v^n_1, \dots, v^n_k$
  on the vertices.
  Moreover, we denote the requests $v^i_1, \dots, v^i_k$ as layer $i$.
  Let $X$ denote the set of vertices $v^i_{x_i}$ for $i =1, \dots, n$.
  
  We start with a few observations about $G_\nu$ that are straightforward.

  \begin{observation}
    The graph $G_\nu[X]$ is an independent set of size $n$.  In particular, it has property \pty.
  \end{observation}
  
  \begin{observation} \label{reduction2}
    $G_\nu[v^i_1, \dots, v^i_k]=H$ for an arbitrary but fixed $i$.
    Thus, any induced subgraph of $G_\nu$ that contains $G_\nu[v^i_1,
    \dots, v^i_k]$ not does not have property \pty.
  \end{observation}
  
  \begin{observation} \label{reduction3}
    Consider a set of vertices, $V$, in $G_\nu$ which is disjoint from $X$. If
    $|V| \geq k \alpha \log n$, then $G_\nu[V]$ does not have property \pty.
  \end{observation}

  Note that $V$ must in this case contain vertices from at least $\alpha \log
  n$ different layer.  These have $H$ as an induced subgraph since none of them
  are in $X$.
  
  Now consider a $c$-competitive algorithm $\alg_\pty$ for \maxP with
  preemption reading $b(nk)$ bits of advice (recall that $nk$ is the length of its input
  $G_\nu$).
  We start by describing an algorithm $\alg'$ for Anti-$k$-SGKH, which uses $b(nk)$ bits of advice ($n$ is the length of its input).
  Afterwards, we use it to define another algorithm $\alg$ for Anti-$k$-SGKH, which uses $O(\log^2 n)$ additional advice bits and is
  $c$-competitive.
  
  For a given string $\nu=x_1,\dots,x_n$, let $\perp,x_1,\dots,x_n$ be the
  input for Anti-$k$-SG.  Let $S$ be the solution (set of vertices)
  returned by $\alg_\pty$ on $G_\nu$ (with the proper advice).  Note that this
  is the resulting set of vertices after the unwanted vertices have been
  preempted.  $\alg'$ works as follows: It constructs the graph
  $G_\nu$ online and simulates $\alg_\pty$ on it.  When a request $i$ arrives,
  the goal of $\alg'$ is to guess a number in $\{1,\dots,k\}$ different from $x_i$.  It
  does this by presenting all vertices in layer $i$ to $\alg_\pty$.  It is
  important to note that the vertices in layer $i$ can be presented without
  knowledge of $x_i, \dots, x_n$.  Let $S_i$ denote the set of these vertices,
  which are accepted by $\alg_\pty$ and have not been preempted after request $v^i_k$.  In
  layer $i$, $\alg'$ outputs $y_i=w$ where $w$ is the smallest number in $\{1,\dots,k\}$
  such that $v^i_w \notin S_i$.  Note that such a number always exists because
  of Observation \ref{reduction2}.
  
  We now describe $\alg$, which uses $O(\log^2 n)$ additional advice bits.  The
  advice for $\alg$ consists of three parts: First, a self-delimiting encoding
  of $n$ (this requires $O(\log n)$ bits).  This is followed by a list of up to
  $k\alpha \log n$ indices $i$, where $\alg$ outputs $y_i=x_i$.  Let
  $S_{\text{error}}$ denote the set of these indices. A
  self-delimiting encoding of this requires $O( \log^2 n)$ bits (recall that
  $\alpha$ and $k$ are constant).  Finally, the advice which $\alg'$ received
  is included. This is $b(nk)$ bits.
  
  The algorithm \alg works as follows:
  For each request, it does the following:
  If the request is not in $S_{\text{error}}$ it outputs the same as $\alg$.
  If the request is in $S_{\text{error}}$ it outputs another number in $[k]$.
  
  We now argue that $\alg$ is $c$-competitive.  We note that the optimal
  offline solution to \maxP on $G_\nu$ contains at most $k\alpha \log n$
  vertices not in $X$.  The same of course holds for the solution produced by
  $\alg_\pty$.  It holds that if in layer $i$, $\alg_\pty$ accepts a vertex in
  $X$, then $\alg'$ outputs $y_i \neq x_i$.  This means that the score of
  $\alg_\pty$ is at most $k\alpha \log n$ more than the score of $\alg'$.
  Since the score of \alg is $k\alpha \log n$ more than the score of $\alg'$,
  we have that \alg is $c$-competitive.
\end{proof}

Combining Theorems \ref{thm:anti-lb} and \ref{thm:lower_maxp_2}, we get the following corollary.

\begin{corollary}
  For a non-trivial hereditary property, $\pty$, with a minimal forbidden subgraph of size $k$,
  the following holds:
  Let $1 < c < \frac{k}{k-1}$.
  A $c$-competitive algorithm for $\maxP$ with preemption must read at least
  \[ b\geq(1-h_k(1/c))n\frac{\log k}{k} -O(\log^2 n) \]
  bits of advice, where $n$ is the input length.
  Here, $h_k$ is the $k$-ary entropy function given by 
  $h_k(x)=x\log_{k}(k-1)-x\log_k(x)-(1-x)\log_k(1-x)$.
\end{corollary}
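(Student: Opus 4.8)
The plan is to obtain the corollary purely by composing the reduction of Theorem~\ref{thm:lower_maxp_2} with the lower bound of Theorem~\ref{thm:anti-lb}, instantiated at alphabet size $\sigma=k$, where $k$ is the size of a smallest minimal forbidden subgraph of \pty. First I would fix a $c$-competitive algorithm $\mathcal{A}$ for \maxP with preemption and let $b(\cdot)$ denote the function bounding the number of advice bits $\mathcal{A}$ reads as a function of the length of its input graph. Applying Theorem~\ref{thm:lower_maxp_2} to $\mathcal{A}$, for every $n$ there is a $c$-competitive algorithm for Anti-$k$-SGKH on inputs of length $n$ that reads at most $b(kn)+O(\log^2 n)$ advice bits.

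Next I would invoke Theorem~\ref{thm:anti-lb} with $\sigma=k$. The hypothesis $\sigma\ge 2$ holds because \pty is non-trivial, so its smallest forbidden subgraph has at least two vertices, i.e.\ $k\ge 2$; and the hypothesis $1\le c<\frac{\sigma}{\sigma-1}=\frac{k}{k-1}$ is exactly the range assumed in the corollary. The theorem then states that any $c$-competitive Anti-$k$-SGKH algorithm on length-$n$ inputs must read at least $(1-h_k(1/c))\,n\log_2 k$ advice bits. Chaining this with the bound from the previous step gives $b(kn)+O(\log^2 n)\ge (1-h_k(1/c))\,n\log_2 k$, hence $b(kn)\ge (1-h_k(1/c))\,n\log k - O(\log^2 n)$.

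It then remains to re-express this in terms of the length $N$ of the \maxP instance rather than the length $n$ of the Anti-$k$-SGKH instance. Writing $N=kn$ (so $n=N/k$) and substituting, and using that $k$ is a constant so that $\log^2(N/k)=O(\log^2 N)$, one gets $b(N)\ge (1-h_k(1/c))\,\frac{N}{k}\log k - O(\log^2 N)$; renaming $N$ back to $n$ yields exactly the claimed bound $b\ge (1-h_k(1/c))\,n\,\frac{\log k}{k} - O(\log^2 n)$. For lengths $N$ that are not multiples of $k$ one can either round down, discarding at most $k-1=O(1)$ trailing vertices, which only affects the lower-order term, or simply note that the bound for all multiples of $k$ already gives a lower bound along an arithmetic progression of constant density.

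I do not expect a genuine obstacle here, since the corollary is a routine composition; the only points that need care are bookkeeping ones. One must keep straight which parameter plays the role of ``input length'' in each cited theorem: the factor $\log k/k$ in the final bound is precisely the $\log_2\sigma$ of Theorem~\ref{thm:anti-lb} divided by the length-blowup factor $k$ introduced by the reduction of Theorem~\ref{thm:lower_maxp_2}, and the $O(\log^2 n)$ slack is just the encoding overhead ($n$ plus the error list) carried through that reduction. One should also check that the assumption $1<c<\frac{k}{k-1}$ both matches the admissible range of Theorem~\ref{thm:anti-lb} and makes the bound non-vacuous: it forces $1/c\in(\tfrac{k-1}{k},1)$, and since $h_k$ attains its maximum value $1$ only at $\tfrac{k-1}{k}$, we indeed have $1-h_k(1/c)>0$.
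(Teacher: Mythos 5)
Your proposal is correct and is exactly the composition the paper intends: the corollary is stated as an immediate consequence of Theorems \ref{thm:anti-lb} and \ref{thm:lower_maxp_2}, and your chaining of the reduction with the anti-string-guessing lower bound, together with the change of variable $N=kn$ accounting for the factor $\log k/k$, is the intended argument. The additional checks you note ($k\ge 2$ from non-triviality, the admissible range of $c$, and non-vacuousness of $1-h_k(1/c)$) are all sound.
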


\section{Closing Remarks}\label{sec:closrem}

In Corollary \ref{cor:max-uplow}, we describe lower and upper bounds for the
advice complexity of all online hereditary graph problems, which are
essentially tight (there is just a gap of $O(\log^2 n)$).  It turns out that,
for all of them, roughly the same amount of information about the future is
required to achieve a certain competitive ratio.

Intriguingly, we see a quite different picture for cohereditary properties.
Theorem \ref{thm:min-up} gives the same upper bound as we had for hereditary
properties, and Theorem \ref{thm:min-high} shows that this upper bound is essentially
tight.  However, Theorem \ref{thm:min-low} shows that there exist cohereditary
problems that have an advice complexity as low as $O(\log n)$ bits to be
optimal.  It remains open if it is only those problems with a finite set of
obligatory graphs that have this very low advice complexity, or if this can also
happen for cohereditary problems with an infinite set of obligatory graphs.

For hereditary problems with preemption, we show that to achieve a competitive
ratio strictly smaller than $\frac{k}{k-1}$, a linear number of advice bits is
needed.  This is asymptotically tight, since optimality (even without
preemption) can be achieved with $n$ bits.  Furthermore, we show a lower
bound for non-constant competitive ratios (that are roughly smaller than
$\sqrt{n}$).  It remains open if there is an algorithm for the preemptive case,
which uses fewer advice bits than the algorithms solving the same problem in
the non-preemptive case.

\newpage

\begin{appendix}

\section{Proof of Theorem \ref{thm:ind_preem}}\label{app:ind_preem}

Consider an instance of $\sigma$-SGKH of length $n'-1$.  We construct an
unweighted undirected graph $G$ as input for the maximum independent set
problem with preemption as follows.  We subdivide the input into $n'=n/\sigma$
layers.  In every layer, vertex after vertex, $\sigma$ vertices are given that
form a clique.  These cliques are denoted by $C_1,\dots,C_{n'}$; the clique
$C_i$, $1\le i\le n'-1$, contains a designated vertex $c_i$ such that all
vertices that are revealed subsequently are connected to all vertices $v\in
C_i$ if and only if $v\ne c_i$.  This way, $G$ contains an independent set
(namely $\{c_1,\dots,c_{n'-1},v\}$ for any $v\in C_{n'}$) of size $n'$.

Note that due to the condition that an independent set is maintained in every
time step, no online algorithm is allowed to accept more than one vertex in any
layer.  If, for clique $C_i$ with $1\le i\le n'-1$, the algorithm accepts a
vertex $v\ne c_i$, in the next layer it has two choices.  It can preempt $v$
which reduces its profit by $1$, or it can keep $v$ and not have any more
profit on the rest of the instance. Clearly, the second option is never
superior to the first one, so we can assume, without loss of generality, that the algorithm
always preempts incorrectly guessed vertices.  Also, assuming that the
algorithm always selects exactly one vertex means no loss of generality (it has
no benefit not to select any vertex in a given layer).

Given a $c(n)$-competitive independent set algorithm $\alg$, one can construct
a $\sigma$-SGKH algorithm $\alg'$ as follows: for each request, $\alg'$
presents $\alg$ with a new set of $\sigma$ vertices, and the vertex selected by
$\alg$ is interpreted as the answer to $\sigma$-SGKH. The next request of
$\sigma$-SGKH reveals the correct guess, so the instance for $\alg$ can be
constructed incrementally.  The choice of the designated vertex $c_i$ for $C_i$
with $1\le i\le n'-1$ by \alg increases the profit of $\alg$ by 1, and
corresponds to the correct guess of the natural number from $1$ to $\sigma$ by
$\alg'$.  A wrong choice gets preempted, and does not contribute to the profit of
\alg.  Let 
\begin{equation}
  \alpha:=\frac{n'-c(n)}{c(n)n'-c(n)}\;.
\end{equation}
Since \alg is $c(n)$-competitive, and the optimum has size $n'$, \alg selects
an independent set of size at least $n'/c(n) = \alpha(n'-1)+1$, meaning that
the fraction of correct guesses was at least $\alpha$.  On the other hand, due
to Theorem \ref{thm:SGKH}, any algorithm for $\sigma$-SGKH that
correctly guesses a fraction of $\alpha$ numbers (for
$\frac{1}{\sigma}\le\alpha\le1$) on an input of length $n'-1$ requires at least
$S:=F(\sigma,\alpha)\cdot(n'-1)\cdot\log_2\sigma$ bits of advice where
\[ F(\sigma,\alpha):=1+(1-\alpha) \log_\sigma\left(\frac{1-\alpha}{\sigma-1}\right)+\alpha \log_\sigma\alpha\;. \]
We have 
\begin{equation}\label{eqn-alpha}
  \frac{1}{c}\ge\alpha=\frac{n'-c}{cn'-c}\ge\frac{1}{2c}\;.
\end{equation}
provided that $1\le c<\frac{n'+1}{2}$.
Using \eqref{eqn-alpha}, we get
\begin{align*}
  F(\sigma,\alpha) &\ge 1+\frac{c-1}{c}\cdot\frac{\log\left(\frac{c-1}{c(\sigma-1)}\right)}{\log\sigma}+\frac{1}{2c}\cdot\frac{\log\left(\frac{1}{2c}\right)}{\log\sigma} \\
                   &= 1-\frac{c-1}{c}\cdot\frac{\log(\sigma-1)}{\log\sigma}-\frac{1}{\log\sigma}\cdot\left[\frac{\log\left(\frac{c}{c-1}\right)}{\frac{c}{c-1}}+\frac{\log(2c)}{2c}\right] \\
                   &\ge\frac{1}{c}-\frac{1}{\log\sigma}\cdot\left[\frac{\log\left(\frac{c}{c-1}\right)}{\frac{c}{c-1}}+\frac{\log(2c)}{2c}\right]\;.
\end{align*}
Since for any $x\ge0$ it holds\footnote{Let $f(x)=\ln(1+x)$, $g(x)=x(1+x)$.  It
holds $f(0)=g(0)=0$. The derivatives are $f'(x)=\frac{1}{1+x}$, $g'(x)=1+2x$.
Both $f(x)$, $g(x)$  are increasing, $f'(0)=g'(0)=1$, and $f'(x)$ is decreasing
while $g'(x)$ is increasing.} $\ln(1+x)\le x(1+x)$, we get 
\[ F(\sigma,\alpha)\ge\frac{1}{c}-\frac{1}{\log\sigma}\cdot\left[\frac{1}{\ln2(c-1)}+\frac{\log(2c)}{2c}\right]
   \ge \frac{1}{c}-\frac{1}{\log\sigma}\cdot \frac{\log\left(4(2c)^{\ln2}\right)}{2\ln2(c-1)}\;. \]
Subsequently, for the required advice we get
\begin{equation}
  S \ge \frac{n-\sigma}{\sigma}\cdot\log\sigma\cdot F(\sigma,\alpha)
    \ge \frac{n-\sigma}{\sigma}\cdot\left[\frac{\log\sigma}{c}-\frac{\log\left(4(2c)^{\ln2}\right)}{2\ln2(c-1)}\right] \;.
\end{equation}
We have to choose $\sigma$ in such a way that $\alpha\ge\frac{1}{\sigma}$ and $c<\frac{\frac{n}{\sigma}+1}{2}$.
It is easy to see that for $\sigma=2c$ both inequalities hold for the assumed values of $c$; the second inequality
being equivalent to 
$4c^2-2c-n<0$.
The function 
\[ h(c):=\frac{\log\left(4(2c)^{\ln2}\right)}{2\ln2(c-1)}=\frac{2+\ln2\log(2c)}{2\ln2(c-1)} \]
converges to $\frac{1}{2}\frac{\log(2c)}{c}$. Hence, for sufficiently large $c$, 
$h(c)$ comprises a linear fraction of $\frac{\log\sigma}{c}$. Solving numerically, for $c\ge8$,
$h(c)\le0.99\cdot\frac{\log(2c)}{c}$,
yielding
\[ S\ge 0.01\cdot\frac{n-2c}{2c}\cdot\frac{\log(2c)}{c}\;, \]
which finishes the proof.
\end{appendix}


\begin{thebibliography}{10}
\bibitem{BFL06}
  Y.~Bartal, A.~Fiat, and S.~Leonardi.
  \newblock Lower bounds for on-line graph problems with application to on-line circuit and optical routing.
  \newblock \emph{SIAM Journal on Computing}, 36(2):354--393, 2006.
\bibitem{BHKKSS14}
  H.-J.~B\"ockenhauer, J.~Hromkovi\v{c}, D.~Komm, S.~Krug, J.~Smula, and A.~Sprock.
  \newblock The string guessing problem as a method to prove lower bounds on the advice complexity.
  \newblock \emph{Theoretical Computer Science} 554:95--108, 2014.
\bibitem{BKKK11}
  H.-J.~B\"ockenhauer, D.~Komm, R.~Kr\'alovi\v{c}, and R.~Kr\'alovi\v{c}.
  \newblock On the advice complexity of the $k$-server problem.
  \newblock In L.~Aceto, M.~Henzinger, and J.~Sgall, editors, \emph{Proc.~of ICALP 2011}, volume 6755 of \emph{LNCS}, pp.~207--218. Springer-Verlag, 2011.
\bibitem{BKKKM09}
  H.-J. B\"ockenhauer, D.~Komm, R.~Kr\'alovi\v{c}, R.~Kr\'alovi\v{c}, and T.~M\"omke.
  \newblock On the advice complexity of online problems.
  \newblock In Y.~Dong, D.-Z. Du, and O.~H. Ibarra, editors, \emph{Proc.~of ISAAC 2009}, volume 5878 of \emph{LNCS}, pp.~331--340. Springer-Verlag, 2009.
\bibitem{BE98}
  A.~Borodin and R.~El-Yaniv.
  \newblock \emph{Online Computation and Competitive Analysis}.
  \newblock Cambridge University Press, 1998.
\bibitem{BE99}
  A.~Borodin and R.~El-Yaniv.
  \newblock On randomization in on-line computation.
  \newblock \emph{Information and Computation}, 150(2):244 -- 267, 1999.
\bibitem{BFKM15}
  J.~Boyar, L.~M. Favrholdt, C.~Kudahl, and J.~W. Mikkelsen.
  \newblock Advice Complexity for a Class of Online Problems.
  \newblock In E.~W. Mayr and N.~Ollinger, editors, \emph{Proc.~of STACS 2015}, volume~30 of \emph{LIPIcs}, pp.~116--129, Dagstuhl, Germany, 2015. Schloss Dagstuhl--Leibniz-Zentrum fuer Informatik.
\bibitem{DKP09}
  S.~Dobrev, R.~Kr\'alovi\v{c}, and D.~Pardubsk\'a.
  \newblock Measuring the problem-relevant information in input.
  \newblock \emph{RAIRO Theoretical Informatics and Applications}, 43(3):585--613, 2009.
\bibitem{DRR15}
  S.~Dobrev, R.~Kr\'alovi\v{c}, Richard Kr\'alovi\v{c}:
  \newblock Advice complexity of maximum independent set in sparse and bipartite graphs.
  \newblock \emph{Theory of Computing Systems} 56(1):197--219, 2015.
\bibitem{EFKR11}
  Y.~Emek, P.~Fraigniaud, A.~Korman, and A.~Ros\'en.
  \newblock Online computation with advice.
  \newblock \emph{Theoretical Computer Science}, 412(24):2642--2656, 2011.
\bibitem{HKK10}
  J.~Hromkovi\v{c}, R.~Kr\'alovi\v{c}, and R.~Kr\'alovi\v{c}.
  \newblock Information complexity of online problems.
  \newblock In F.~Murlak and P.~Sankowski, editors, \emph{Proc.~of MFCS 2010}, volume 6281 of \emph{LNCS}, pp.~24--36. Springer-Verlag, 2010.
\bibitem{KK11}
  D.~Komm and R.~Kr\'alovi\v{c}.
  \newblock Advice complexity and barely random algorithms.
  \newblock \emph{RAIRO Theoretical Informatics and Applications}, 45(2):249--267, 2011.
\bibitem{MoT}
  I.~Csisz\'ar.
  \newblock The method of types.
  \newblock \emph{Information Theory, IEEE Transactions on} 44(6):2505--2523, 1998.
\bibitem{LY80}
  J.~M.~Lewis and M.~Yannakakis.
  \newblock The node-deletion problem for hereditary properties is \textsf{NP}-complete.
  \newblock \emph{Journal of Computer and System Sciences}, 20(2):219 -- 230, 1980.
\bibitem{LY93}
  C.~Lund and M.~Yannakakis.
  \newblock The approximation of maximum subgraph problems.
  \newblock In A.~Lingas, R.~Karlsson, and S.~Carlsson, editors, \emph{Proc.~of ICALP 1993}, volume 700 of \emph{LNCS}, pp.~40--51. Springer-Verlag, 1993.
\bibitem{R30}
  F.~P.~Ramsey.
  \newblock On a problem in formal logic.
  \newblock \emph{Proc.~London Mathematical Society (3)}, 30:264--286, 1930.
\bibitem{ST85}
  D.~D.~Sleator and R.~E. Tarjan.
  \newblock Amortized efficiency of list update and paging rules.
  \newblock \emph{Communications of the ACM}, 28(2):202--208, 1985.
\bibitem{Yao77}
  A.~C.-C.~Yao.
  \newblock Probabilistic computations: Toward a unified measure of complexity (extended abstract).
  \newblock In \emph{Proc.~of FOCS 1977}, pp.~222--227. IEEE Computer Society, 1977.
\bibitem{mikkelsen15}
J.~W. Mikkelsen.
\newblock Randomization can be as helpful as a glimpse of the future in online
  computation.
\newblock {\em CoRR}, abs/1511.05886, 2015.
\end{thebibliography}
\end{document}